\newcommand*{\TECHREP}{}%
\newtheorem{theorem}{Theorem}[section]
\newtheorem{corollary}{Corollary}[theorem]
\newtheorem{lemma}[theorem]{Lemma}
\crefname{section}{§}{§§}
\Crefname{section}{§}{§§}
\definecolor{DarkGray}{gray}{0.85}
\newcommand{\LSC}[0]{\textsc{AllConcur}}
\newcommand{\LSCextended}[0]{Algorithm for LeaderLess CONCURrent atomic broadcast}
\newcommand{\tobroadcast}[1]{\emph{A}-\emph{broa\-dcast}$(\mathit{#1})$}
\newcommand{\todeliver}[1]{\emph{A}-\emph{deliver}$(\mathit{#1})$}
\newcommand{\rbroadcast}[1]{\emph{R}-\emph{broa\-dcast}$(\mathit{#1})$}
\newcommand{\rdeliver}[1]{\emph{R}-\emph{deliver}$(\mathit{#1})$}
\newcommand{\sender}[1]{$\mathit{sender}(\mathit{#1})$}
\newcommand{\code}[1]{Source code: \url{https://github.com/mpoke/allconcur/commit/c09dee8f8f186ee7b2d4fdb23e682016eb3dbde8}}
\DeclareMathAlphabet{\mathpzc}{OT1}{pzc}{m}{it}
\newlength\mylen
\begin{document}

\sloppy

\ifdefined\PAPER

\copyrightyear{2017}
\acmYear{2017}
\setcopyright{licensedothergov}
\acmConference{HPDC '17}{June 26-30, 2017}{Washington , DC, USA}\acmPrice{15.00}\acmDOI{http://dx.doi.org/10.1145/3078597.3078598}
\acmISBN{978-1-4503-4699-3/17/06}

\title{AllConcur: Leaderless Concurrent Atomic Broadcast}

\author{Marius Poke}
\affiliation{%
  \institution{HLRS \\ University of Stuttgart}
}
\email{marius.poke@hlrs.de}
\author{Torsten Hoefler}
\affiliation{%
  \institution{Department of Computer Science \\ ETH Zurich}
}
\email{htor@inf.ethz.ch}
\author{Colin W. Glass}
\affiliation{%
  \institution{HLRS\\University of Stuttgart}
}
\email{glass@hlrs.de}


\renewcommand{\shortauthors}{M. Poke, T. Hoefler, C. W. Glass}

\fi

\ifdefined\TECHREP
\title{\Large AllConcur: Leaderless Concurrent Atomic Broadcast \\ \large (Extended Version)\thanks{\textcopyright~2017 Copyright held by the owner/author(s). 
This is the author's version of the work. It is posted here for your personal use. Not for redistribution. 
The definitive version was published in HPDC~'17~\cite{poke2017allconcur}, http://dx.doi.org/10.1145/3078597.3078598.
Please refer to that publication when citing \LSC{}.}}

\author{
{\rm Marius Poke}\\
HLRS \\ 
University of Stuttgart \\
\emph{marius.poke@hlrs.de}
\and
{\rm Torsten Hoefler}\\
Department of Computer Science \\
ETH Zurich \\
\emph{htor@inf.ethz.ch}
\and
{\rm Colin W. Glass}\\
HLRS \\ 
University of Stuttgart\\
\emph{glass@hlrs.de}
} 

\maketitle
\thispagestyle{empty}

\fi

\begin{abstract}
Many distributed systems require coordination between the components involved. 
With the steady growth of such systems, the probability of failures increases, 
which necessitates scalable fault-tolerant agreement protocols. 
The most common practical agreement protocol, for such scenarios, is leader-based atomic broadcast. 
In this work, we propose \LSC{}, a distributed system that provides agreement through a leaderless 
concurrent atomic broadcast algorithm, thus, not suffering from the bottleneck of a central coordinator.
In \LSC{}, all components exchange messages concurrently through a logical overlay network that employs 
early termination to minimize the agreement latency. 
Our implementation of \LSC{} supports standard sockets-based TCP as well as high-performance InfiniBand Verbs communications. 
\LSC{} can handle up to $135$ million requests per second and achieves $17\times$ higher throughput than today's standard leader-based 
protocols, such as Libpaxos. 
%
Thus, \LSC{} is highly competitive with regard to existing solutions and, due to its decentralized approach, 
enables hitherto unattainable system designs in a variety of fields.
\end{abstract}

\ifdefined\PAPER
%
%
%

\keywords{Distributed Agreement; Leaderless Atomic Broadcast; Reliability}

\maketitle
\fi

\section{Introduction}

Agreement is essential for many forms of collaboration in distributed systems. 
Although the nature of these systems may vary, ranging from distributed services provided by 
datacenters~\cite{Unterbrunner2014,Corbett:2012:SGG:2387880.2387905,DeCandia:2007:DAH:1323293.1294281} 
to distributed operating systems, such as Barrelfish~\cite{barrelfish} and Mesosphere's DC/OS~\cite{dcos}, they have in common that 
all the components involved regularly update a shared state.
In many applications, the state updates cannot be reduced, e.g., the actions of players in multiplayer video games.
Furthermore, the size of typical distributed systems has increased in recent years, 
making them more susceptible to single component failures~\cite{Sato:2012:DMN:2388996.2389022}.

Atomic broadcast is a communication primitive that provides fault-tolerant agreement 
while ensuring strong consistency of the overall system. 
\ifdefined\TECHREP
In a nutshell, it ensures that messages are received in the same order by all participants.
Atomic broadcast
\fi
\ifdefined\PAPER
It  
\fi
is often used to implement large-scale coordination services, 
such as replicated state machines~\cite{Hunt:2010:ZWC:1855840.1855851} or travel reservation systems~\cite{Unterbrunner2014}.
Yet, today’s practical atomic broadcast algorithms rely on leader-based approaches,  
\ifdefined\TECHREP
such as Paxos~\cite{Lamport:1998:PP:279227.279229,Lamport2001}.
In such algorithms, the ordering is ensured by a central coordinator, which may become a bottleneck, especially at large scale.
\fi
\ifdefined\PAPER
such as Paxos~\cite{Lamport:1998:PP:279227.279229,Lamport2001},
and thus, they may suffer from the bottleneck of a central coordinator, especially at large scale.
\fi

In this paper, we present \LSC{}\footnote{\LSCextended{}}---a distributed agreement system that relies 
on a leaderless atomic broadcast algorithm.
In \LSC{}, all participants exchange messages concurrently through an overlay network, described by a digraph 
\ifdefined\TECHREP
$G$~(\cref{sec:digraph}). 
\fi
\ifdefined\PAPER
$G$~(\cref{sec:Rbcast}). 
\fi 
The maximum number of failures \LSC{} can sustain is given by $G$'s connectivity and can be adapted to system-specific requirements~(\cref{sec:choose_g}).
Moreover, \LSC{} employs a novel early termination mechanism~(\cref{sec:earlyterm}) that reduces 
the expected number of communication steps significantly~(\cref{sec:prob_analysis}). 

\textbf{Distributed agreement vs. replication.} 
Distributed agreement is conceptually different from state machine replication
(SMR)~\cite{Schneider:1990:IFS:98163.98167,Lamport1978}:
Agreement targets collaboration in distributed systems, while SMR aims to increase data reliability.
Moreover, the number of agreeing components is an input parameter, while the number of replicas depends 
on the required data reliability.

\begin{figure}[!tp]
\centering
\subfloat[\label{fig:paxos}Leader-based agreement] {
\scalebox{0.9}{\input{figures/paxos.tex}}
}
\ifdefined\PAPER
\qquad
\fi
\subfloat[\label{fig:allconcur} \LSC{}]{        
\includegraphics[width=.17\textwidth]{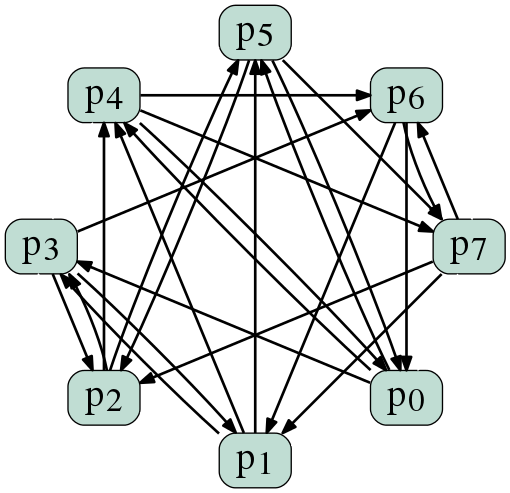}
}
\caption{Agreement among $8$ servers:
(a) Using a leader-based group; three operations needed per update---(1) send; (2) replicate; and (3) disseminate.
(b) Using a digraph $G$ with degree three and diameter two~\cite{Soneoka:1996:DDC:227095.227101}.}
\label{fig:deployment}
\end{figure}

\textbf{\LSC{} vs. leader-based agreement.} 
We consider the agreement among $n$ servers (see Figure~\ref{fig:deployment} for $n=8$). \LSC{} has the following properties:
(1) subquadratic work, i.e., $\mathcal{O}(nd)$, where $d$ is $G$'s degree~(\cref{sec:work}); 
2) adjustable depth, given by $G$'s diameter and fault diameter~(\cref{sec:prob_analysis}); 
(3) at most $2d$ connections per server; 
and (4) server-transitivity, i.e., all servers are treated equally, which entails fairness. 
In contrast, typical leader-based deployments do not have all of the above properties. 
Figure~\ref{fig:paxos} shows an example of leader-based agreement.
Each server sends updates to the group's leader, which, for reliability, replicates them within the group; 
the replicated updates are then disseminated to all servers. In typical leader-based approaches, 
such as client-server gaming platforms, servers interact directly with the leader (for both sending and receiving updates).
Although such methods have minimal depth and can ensure fairness, they require quadratic work and the leader needs 
to maintain $n$ connections~(\cref{sec:comparison}).

\textbf{Data consistency.}
\LSC{} provides agreement while guaranteeing strong consistency. In particular, we focus on the strong consistency of state updates; 
thus, throughout the paper we use both request and update interchangeably. 
For strongly consistent reads, queries also need to be serialized via atomic broadcast.
Serializing queries is costly, especially for read-heavy workloads. 
Typical coordination services~\cite{Hunt:2010:ZWC:1855840.1855851} 
relax the consistency model: Queries are performed locally and, hence, can return stale data.  
\LSC{} ensures that a server's view of the shared state cannot fall behind more than one round, i.e., one instance of concurrent atomic broadcast;  
thus, locally performed queries cannot be outdated by more than one round. 

\subsection{Applications and summary of results}
\LSC{} enables decentralized coordination services that require strong consistency 
at high request rates; thus, it allows for a novel approach to several real-world applications. 
We evaluate \LSC{} using a set of benchmarks, representative of 
\ifdefined\PAPER
two such applications: (1)~travel reservation systems; and (2)~multiplayer video games.
\fi
\ifdefined\TECHREP
three such applications: (1) travel reservation systems; (2) multiplayer video games; and (3) distributed exchanges.
\fi

\textbf{Travel reservation systems} 
are typical scenarios where updates are preceded by a large number of queries,
e.g., clients check many flights before choosing a ticket. 
To avoid overloading a central server, existing systems either adopt weaker consistency models, 
such as eventual consistency~\cite{DeCandia:2007:DAH:1323293.1294281}, or partition the state~\cite{Unterbrunner2014}, 
not allowing transactions spanning multiple partitions. 
\LSC{} offers strong consistency by distributing queries over multiple servers that agree on the entire state. 
Each server's rate of introducing updates in the system is bounded by its rate of answering queries. 
Assuming 64-byte updates, \LSC{} enables the agreement among $8$ servers, 
each generating 100 million updates per second, in $35\mu s$; 
moreover, the agreement among $64$ servers, each generating 32,000 updates per second, 
takes less than $0.75ms$. 

\textbf{Multiplayer video games}
are an example of applications where the shared state satisfies two 
conditions---it is too large to be frequently transferred through the network and it is periodically updated.
For example, modern video games update the state once every $50ms$ (i.e., $20$ frames per second) by 
only sending changes since the previous state~\cite{Bharambe:2008:DEL:1402958.1403002,Bharambe:2006:CDA:1267680.1267692}. 
Thus, such applications are latency sensitive~\cite{Beigbeder:2004:ELL:1016540.1016556}.
To decrease latency, existing systems either limit the number of players, e.g., $\approx8$ players in real time strategy games, 
or limit the players' view to only a subset of the state, such as the area of interest in first person shooter games~\cite{Bharambe:2008:DEL:1402958.1403002,Bharambe:2006:CDA:1267680.1267692}.
\LSC{} allows hundreds of servers to share a global state view at low latency; 
e.g., it supports the simultaneous interaction of $512$ players, 
using typical update sizes of 40 bytes~\cite{Bharambe:2008:DEL:1402958.1403002}, 
with an agreement latency of $38ms$, thus, enabling so called epic battles~\cite{epic_battles},
while providing strong consistency.

\ifdefined\TECHREP
\textbf{Distributed exchanges}
are a typical example of systems where fairness is essential.
For example, to connect to an exchange service, such as the New York Stock Exchange, 
clients must obtain so called co-locations (i.e., servers with minimal latency to the exchange). 
To ensure fairness, such co-locations are usually standardized; every client subscribing 
to the same co-location service has the same latency, ensured by standardized hardware~\cite{NasdaqRule}. 
Thus, centralized systems cannot support geographically distributed clients. 
\LSC{} enables the deployment of exchange services over geographically distributed servers: 
As long as all clients have an equal latency to any of the servers, fairness is provided. 
Assuming 40-byte client requests, an \LSC{} deployment across 8 servers can process 100 million 
requests per second with a median latency of less than $90\mu s$.  
\fi

In addition, \LSC{} can handle up to 135 million (8-byte) requests per second and achieves $17\times$ 
higher throughput than Libpaxos~\cite{libpaxos}, an implementation of Paxos~\cite{Lamport:1998:PP:279227.279229,Lamport2001},
while its average overhead of providing fault-tolerance is~$58\%$~(\cref{sec:lsc_eval}).

In summary, our work makes four key contributions:
\begin{itemize}
\setlength{\itemsep}{0pt}
  \item the design of \LSC{}---a distributed system that provides agreement 
  through a leaderless concurrent atomic broadcast algorithm~(\cref{sec:lsc_algo});
  \item a proof of \LSC{}'s correctness~(\cref{sec:lsc_correct});
  \item an analysis of \LSC{}'s performance~(\cref{sec:lsc_perform});
  \item implementations over standard sockets-based TCP and high-performance 
  InfiniBand Verbs, that allows us to evaluate \LSC{}'s performance~(\cref{sec:lsc_eval}).
\end{itemize}

\section{The broadcast problem}
\label{sec:bcast}

We consider $n$ servers connected through an overlay network, described by a digraph $G$. 
The servers communicate through messages, which cannot be lost (only delayed)---reliable communication.
Each server may fail according to a \emph{fail--stop} model: A server either operates 
correctly or it fails without further influencing other servers in the group. 
A server that did not fail is called \emph{non-faulty}.
We consider algorithms that tolerate up to $f$ failures, i.e., $f$-resilient.

In this paper, we use the notations from Chandra and Toueg~\cite{Chandra:1996:UFD:226643.226647} 
to describe both reliable and atomic broadcast: 
$m$ is a message (that is uniquely identified);
\rbroadcast{m}, \rdeliver{m}, \tobroadcast{m}, \todeliver{m} are communication primitives 
for broadcasting and delivering messages reliably \mbox{(\emph{R}-)} or atomically \mbox{(\emph{A}-)};
and \sender{m} is the server that R- or A-broadcasts~$m$.
Note that any message $m$ can be R- or A-broadcast at most once.

\subsection{Reliable broadcast}
\label{sec:Rbcast}

Any (non-uniform) reliable broadcast algorithm must satisfy 
three properties~\cite{Chandra:1996:UFD:226643.226647, Hadzilacos:1994:MAF:866693}:
\begin{itemize}
\setlength{\itemsep}{0pt}
  \item (Validity) If a non-faulty server R-broadcasts $m$, then it eventually R-delivers $m$.
  \item (Agreement) If a non-faulty server R-delivers $m$, then all non-faulty servers eventually 
  R-deliver $m$.
  \item (Integrity) For any message $m$, every non-faulty server R-delivers $m$ at most once, and only if 
  $m$ was previously R-broadcast by \sender{m}.
\end{itemize}

A simple reliable broadcast algorithm uses a complete digraph for message dissemination~\cite{Chandra:1996:UFD:226643.226647}. 
When a server executes \rbroadcast{m}, it sends $m$ to all other servers; when a server receives $m$ for the first 
time, it executes \rdeliver{m} only after sending $m$ to all other servers. 
Clearly, this algorithm solves the reliable broadcast problem. 
Yet, the all-to-all overlay network is unnecessary: For $f$-resilient reliable broadcast, 
it is sufficient to use a digraph with vertex-connectivity larger than $f$.

\begin{table}[!tp]
\centering
\ifdefined\PAPER
\small
\fi
\ifdefined\TECHREP
\scriptsize
\fi
\begin{tabular}{ l  l | l l }

  \cmidrule[1.5pt](){1-4}

  \textbf{Notation} &
  \textbf{Description} &
  \textbf{Notation} &
  \textbf{Description} \\
  \hline

  \rowcolor{DarkGray}  
  $G$ & 
  the digraph &  
  $d(G)$ & 
  degree \\
  
  $V(G)$ & 
  vertices  &
  $D(G)$ &
  diameter \\ 

  \rowcolor{DarkGray}  
  $E(G)$ & 
  directed edges &
  $\pi_{u,v}$ &
  path from $u$ to $v$ \\
    
  $v^+(G)$ &
  successors of $v$ &
  $k(G)$ &
  vertex-connectivity \\
  
  \rowcolor{DarkGray}  
  $v^-(G)$ &
  predecessors of $v$ &
  $D_f(G,f)$ &
  fault diameter \\

  \cmidrule[1.5pt](){1-4} 
\end{tabular}
  \caption{Digraph notations.}
\label{tab:notations}
\end{table}

\ifdefined\PAPER
\textbf{Fault-tolerant digraphs.}
We define a digraph $G$ by a set of $n$ vertices $V(G)=\left\{v_i:0\leq i \leq n-1\right\}$
and a set of directed edges $E(G) \subseteq \left\{(u,v):u,v\in V(G) \text{ and } u \neq v\right\}$.
In the context of reliable broadcast, the following parameters are of interest: 
the degree $d(G)$; the diameter $D(G)$; the vertex-connectivity $k(G)$;
and the fault diameter $D_f(G,f)$. The \emph{fault diameter} is the maximum diameter of $G$ after 
removing any $f<k(G)$ vertices~\cite{Krishnamoorthy:1987:FDI:35064.36256}. 
Also, we refer to digraphs with $k(G) = d(G)$ as \emph{optimally connected}~\cite{Meyer:1988:FFG:47054.47067, Dekker2004a}.
Finally, we use the following additional notations: 
$v^+(G)$ is the set of successors of $v \in V(G)$; 
$v^-(G)$ is the set of predecessors of $v \in V(G)$;
and $\pi_{u,v}$ is a path between two vertices $u,v \in V(G)$.
Table~\ref{tab:notations} summarizes all the digraph notations used throughout the paper.
\fi

\ifdefined\TECHREP
\subsubsection{Fault-tolerant digraphs}
\label{sec:digraph}

Let $G$ be a digraph with a set of $n$ vertices $V(G)=\left\{v_i:0\leq i \leq n-1\right\}$
and a set of directed edges $E(G) \subseteq \left\{(u,v):u,v\in V(G) \text{ and } u \neq v\right\}$.
Then $G$ has four parameters: (1) degree $d(G)$; (2) diameter $D(G)$; 
(3) vertex-connectivity $k(G)$; and (4) fault diameter $D_f(G,f)$.
Table~\ref{tab:notations} summarizes all the digraph notations used throughout the paper. 

\textbf{Degree.}
A digraph's degree relates to the concepts of both successor and predecessor of a 
vertex---given an edge $(u,v)\in E(G)$, $v$ is a \emph{successor} of $u$, 
while $u$ is a \emph{predecessor} of $v$. 
For a vertex $v$, the set of all successors (predecessors) is denoted by $v^+(G)$ ($v^-(G)$).
The out-degree (in-degree) of a vertex is the number of its successors (predecessors), 
i.e., $|v^+(G)|$ ($|v^-(G)|$). $G$'s \emph{degree}, denoted by $d(G)$, is the maximum in- or out-degree 
over all vertices; moreover, $G$ is \emph{$d$-regular} (or just regular) 
if $d(G)=|v^+(G)|=|v^-(G)|=d,\,\forall v\in V(G)$.

\textbf{Diameter.}
A path from vertex $u$ to vertex $v$ is a sequence of vertices $\pi_{u,v}=\left(v_{x_1},\ldots,v_{x_d}\right)$ 
that satisfies four conditions: (1) $v_{x_1}=u$; (2) $v_{x_d}=v$; 
(3) $v_{x_i}\neq v_{x_{j\neq i}},\,\forall i,j$; and (4) $(v_{x_i},v_{x_{i+1}})\in E,\,\forall 1 \leq i < d$.
The length of a path, denoted by $|\pi_{u,v}|$ is defined by the number of contained edges. 
$G$'s \emph{diameter}, denoted by $D(G)$, is the length of the longest shortest path 
between any two vertices.

\textbf{Connectivity.}
$G$ is connected if $\exists \pi_{u,v},\,\forall u\neq v\in V(G)$. 
\emph{Vertex-connectivity}, denoted by $k(G)$, is the minimum number of vertices whose removal 
results in a disconnected or a single-vertex digraph.
An alternative formulation is based on the notion of disjoint paths: 
Two paths are \emph{vertex-disjoint} if they contain no common internal vertices.
Thus, according to Menger's theorem, the vertex-connectivity equals the minimum number 
of vertex-disjoint paths between any two vertices.
Vertex-connectivity is bounded by the degree, i.e., $k(G) \leq d(G)$; 
digraphs with $k(G) = d(G)$ are said to be \emph{optimally connected}~\cite{Meyer:1988:FFG:47054.47067, Dekker2004a}.

\textbf{Fault diameter.}
Let $F \subset V(G)$ be a set of $f < k(G)$ vertices that are removed from $G$ resulting 
in a digraph $G_F$ with $V(G_F)=V(G)\setminus F$ and $E(G_F) = \left\{(u,v)\in E(G): u,v\in V(G_F)\right\}$.
For any subset $F$, the resulting digraph $G_F$ is connected; yet, the diameter of $G_F$ may be 
larger than $D(G)$. $G$'s \emph{fault diameter}, denoted by $D_f(G,f)$, is 
the maximum diameter of $G_F,\,\forall F \subset V(G)$.
\fi

\subsection{Atomic broadcast}
\label{sec:TObcast}

In addition to the reliable broadcast properties, atomic broadcast
must also satisfy the following property~\cite{Chandra:1996:UFD:226643.226647, Hadzilacos:1994:MAF:866693}:
\begin{itemize}
\setlength{\itemsep}{0pt}
  \item (Total order) If two non-faulty servers $p$ and $q$ A-deliver messages $m_1$ and $m_2$, 
  then $p$ A-delivers $m_1$ before $m_2$, if and only if $q$ A-delivers $m_1$ before $m_2$. 
\end{itemize}

There are different mechanisms to ensure total order~\cite{Defago:2004:TOB:1041680.1041682}. 
A common approach is to use a distinguished server (leader) as a coordinator. 
Yet, this approach suffers from the bottleneck of a central coordinator~(\cref{sec:comparison}).
An alternative entails broadcast algorithms that ensure atomicity through 
\emph{destinations agreement}~\cite{Defago:2004:TOB:1041680.1041682}: 
All non-faulty servers agree on a message set that is A-delivered.
%
Destinations agreement reformulates the atomic broadcast problem as \emph{consensus} problem~\cite[Chapter~5]{attiya2004distributed};
note that consensus and atomic broadcast are equivalent~\cite{Chandra:1996:UFD:226643.226647}.

\subsubsection{Lower bound}
\label{sec:lowerbound}
Consensus has a known synchronous lower bound: 
In a synchronous round-based model~\cite[Chapter~2]{attiya2004distributed}, any $f$-resilient consensus 
algorithm requires, in the worst case, at least $f+1$ rounds.
Intuitively, a server may fail after sending a message to only one other server; this scenario 
may repeat up to $f$ times, resulting in only one server having the message; this server needs 
at least one additional round to disseminate the message.
For more details, see the proof provided by Aguilera and Toueg~\cite{Aguilera:1998:SBP:866987}.
Clearly, if $G$ is used for dissemination, consensus requires (in the worst case) $f+D_f(G,f)$ rounds.
To avoid assuming always the worst case, we design an \emph{early termination} scheme~(\cref{sec:earlyterm}). 

\subsubsection{Failure detectors}

The synchronous model is unrealistic for real-world distributed systems; 
more fitting is to consider an asynchronous model.
Yet, under the assumption of failures, consensus (or atomic broadcast) cannot be 
solved in an asynchronous model~\cite{Fischer:1985:IDC:3149.214121}: 
We cannot distinguish between failed and slow servers.
To overcome this, we use a failure detector (FD). 
FDs are distributed oracles that provide information about faulty servers~\cite{Chandra:1996:UFD:226643.226647}. 

FDs have two main properties: \emph{completeness} and \emph{accuracy}. 
Completeness requires that all failures are eventually detected;
accuracy requires that no server is suspected to have failed before actually failing.
If both properties hold, then the FD is \emph{perfect} 
(denoted by $\mathcal{P}$)~\cite{Chandra:1996:UFD:226643.226647}.
In practice, completeness is easily guaranteed by a heartbeat mechanism: Each server 
periodically sends heartbeats to its successors; once it fails, its successors detect the lack of heartbeats. 

Guaranteeing accuracy in asynchronous systems is impossible---message delays are unbounded.
Yet, the message delays in practical distributed systems are bounded.
Thus, accuracy can be probabilistically guaranteed~(\cref{sec:probfd}).
Also, FDs can guarantee \emph{eventual accuracy}---eventually, 
no server is suspected to have failed before actually failing.
Such FDs are known as \emph{eventually perfect} (denoted by $\Diamond\mathcal{P}$)~\cite{Chandra:1996:UFD:226643.226647}.
For now, we consider an FD that can be reliably treated as $\mathcal{P}$. 
Later, we discuss the implications of using $\Diamond\mathcal{P}$, 
which can falsely suspect servers to have failed~(\cref{sec:eventualacc}).

\subsection{Early termination}
\label{sec:earlyterm}

The synchronous lower bound holds also in practice: 
A message may be retransmitted by $f$ faulty servers, before a non-faulty server can 
disseminate it completely. Thus, in the worst case, any $f$-resilient consensus algorithm that 
uses $G$ for dissemination requires $f+D_f(G,f)$ communication steps.
Yet, the only necessary and sufficient requirement for safe termination is for \emph{every non-faulty 
server to A-deliver messages only once it has all the messages any other non-faulty server has}.
Thus, early termination requires each server to track all the messages in the system.

\begin{figure*}[!tp]
\centering
\subfloat[\label{fig:bg_example}] {
\includegraphics[width=.2\textwidth]{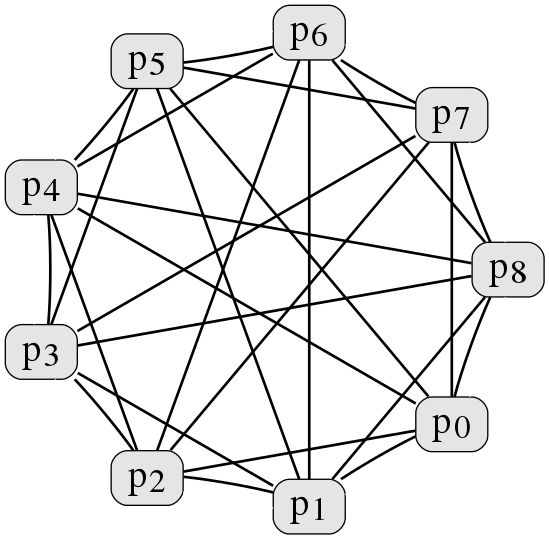}
}
\subfloat[\label{fig:tracking_digraph}] {
\ifdefined\PAPER
\scalebox{0.79}{\input{figures/tracking_digraph.tex}}
\fi
\ifdefined\TECHREP
\scalebox{0.72}{\input{figures/tracking_digraph.tex}}
\fi
}
\caption{(a) A binomial graph. (b) Message tracking within a binomial graph. Messages are shown chronologically from left to right. 
Dashed edges indicate failure notifications; for clarity, we omit the edges to the root of the digraphs.}
\label{fig:tracking_in_bg}
\end{figure*}

Early termination has two parts:
(1) deciding whether a message was A-broadcast; and (2) tracking the A-broadcast messages.
In general, deciding whether a message was A-broadcast entails waiting for $f+D_f(G,f)$ 
communication steps (the worst case scenario must be assumed for safety).
This essentially eliminates any form of early termination, if at least one server does not send a message. 
Yet, if every server A-broadcasts a message\footnote{The message can also be empty---the server A-broadcasts the information that it 
has nothing to broadcast.}, it is \emph{a priori} clear which messages exist; thus, no server waits for non-existent messages.

Every server tracks the A-broadcast messages through the received failure notifications.
As an example, we consider a group of nine servers that communicate via 
a binomial graph~\cite{Angskun2007}---a generalization of 1-way dissemination~\cite{Hensgen:1988:TAB:54616.54617}.
In binomial graphs, two servers $pi$ and $pj$ are connected if 
$j=i\pm2^l(\text{mod}\, n),\forall 0\leq l \leq \lfloor\log_2{n}\rfloor$ (see Figure~\ref{fig:bg_example}).
We also consider a failure scenario in which $p_0$ fails after sending 
its message $m_0$ only to $p_1$; 
$p_1$ receives $m_0$, yet, it fails before it can send it further.
How long should another server, e.g., $p_6$, wait for $m_0$? 

Server $p_6$ is not directly connected to $p_0$, so it cannot directly detect its failure. 
Yet, $p_0$'s non-faulty successors eventually detect $p_0$'s failure. Once they suspect 
$p_0$ to have failed, they stop accepting messages from $p_0$; 
also, they R-broadcast notifications of $p_0$'s failure. 
For example, let $p_6$ receive such a notification from $p_2$; 
then, $p_6$ knows that, if $p_2$ did not already send $m_0$, then $p_2$ did not receive $m_0$ from $p_0$.
Clearly, both \tobroadcast{} and \rbroadcast{} use the same paths for dissemination; 
the only difference between them is the condition to deliver a message. 
If $p_2$ had received $m_0$ from $p_0$, then it would 
have sent it to $p_6$ before sending the notification of $p_0$'s failure.
Thus, using failure notifications, $p_6$ can track the dissemination of $m_0$. 
Once $p_6$ receives failure notifications from all of $p_0$'s 
and $p_1$'s non-faulty successors, it knows that no non-faulty server is in possession of $m_0$. 

\section{The AllConcur algorithm}
\label{sec:lsc_algo}

\LSC{} is a completely decentralized, $f$-resilient, round-based atomic broadcast algorithm 
that uses a digraph $G$ as an overlay network. 
In a nutshell, in every round $R$, every non-faulty server performs three tasks: 
(1) it A-broadcasts a single (possibly empty) message;
(2) it tracks the messages A-broadcast in $R$ using the early termination mechanism described in Section~\ref{sec:earlyterm};
and (3) once done with tracking, it A-delivers---in a deterministic order---all the messages A-broadcast in $R$ that it received.
Note that A-delivering messages in a deterministic order entails that A-broadcast messages do not have to be received in the same order.
When a server fails, its successors detect the failure and R-broadcast failure notifications to the other servers; 
these failure notifications enable the early termination mechanism.
Algorithm~\ref{alg:lsc} shows the details of \LSC{} during a single round. 
Later, we discuss the requirements of iterating \LSC{}.

Initially, we make the following two assumptions: (1) the maximum number of failures is bounded, i.e., $f < k(G)$;  
and (2) the failures are detected by $\mathcal{P}$. 
In this context, we prove correctness---we show that the four properties of (non-uniform) atomic broadcast 
are guaranteed~(\cref{sec:lsc_correct}).
Then, we provide a probabilistic analysis of accuracy: If the network delays can be approximated as part of a known 
distribution, then we can estimate the probability of the accuracy property to hold~(\cref{sec:probfd}).
Finally, we discuss the consequences of dropping the two assumptions~(\cref{sec:consequences}). 

\LSC{} is message-based. Each server $p_i$ receives messages from its predecessors and sends messages to its successors.
We distinguish between two message types: 
(1) $\langle \mathit{BCAST},\, m_j \rangle$, a message A-broadcast by $p_j$;
and (2) $\langle \mathit{FAIL},\, p_j,\, p_k \in p_j^+(G)\rangle$, a notification, 
R-broadcast by $p_k$, indicating $p_k$'s suspicion that its predecessor $p_j$ has failed.
Note that if $p_i$ receives the notification and $p_k=p_i$, then it originated from $p_i$'s own FD.
Algorithm~\ref{alg:lsc} starts when at least one server A-broadcasts a message (line~\ref{alg:lsc_start}). 
Every server sends a message of its own, at the latest as a reaction upon receiving a message. 

\begin{algorithm}[!tb]
\scriptsize
\DontPrintSemicolon

\SetKwData{Input}{$M_i$}
\SetKwData{Fails}{$F_i$}
\SetKwData{aServer}{$p_\ast$}
\SetKwData{aMessage}{$m_\ast$}
\SetKwData{Suspected}{$\mathit{Q}$}
\SetKwData{InputGraph}{$\mathbf{g_i}$}
\SetKwData{FailDetector}{$\mathit{FD}$}

\SetKwData{Terminate}{check\_termination()}
\SetKwData{Done}{$\mathit{done}$}
\SetKwData{Continue}{$\mathit{continue}$}

\SetKwFunction{SendTo}{SendMsgTo}
\SetKwFunction{Sort}{sort}

\KwIn{$n$; $f$; $G$; $m_i$; $\Input\leftarrow\emptyset$; 
$\Fails\leftarrow\emptyset$;
$V(\InputGraph[p_i])\leftarrow\emptyset$; $V(\InputGraph[p_j])\leftarrow\{p_j\},\,\forall j \neq i$}

\BlankLine

\SetKwBlock{tobcast}{def \tobroadcast{m_i}:}{end}
\tobcast
{
 \label{alg:lsc_start}
  \textbf{send} $\langle \mathit{BCAST},\, m_i \rangle$ \textbf{to} $p_i^+(G)$\;
  \Input $\leftarrow$ \Input $\cup$ $\{m_i\}$\;
  \Terminate\;
}

\SetKwBlock{terminate}{def \Terminate:}{end}
\terminate
{
  \If{$V(\InputGraph[p])=\emptyset,\,\forall p$} 
  {\label{alg:lsc_termin_cond}
    \ForEach{$m \in \Sort{\Input}$} {
      \todeliver{m}
      \tcp*[r]{A-deliver messages}
      \label{alg:lsc_to_deliver}
     } 
     \tcc{preparing for next round}
     \ForEach{server \aServer} {\label{alg:lsc_server_remove} 
      \If{$\aMessage \notin \Input$}{
	$V(G) \leftarrow V(G) \setminus \{\aServer\}$ \tcp*[r]{remove servers}
      }
     }
     \ForEach{$(p,p_s) \in \Fails$ s.t. $p \in V(G) $} {\label{alg:lsc_fail_resend} 
      \textbf{send} $\langle \mathit{FAIL},\, p,\, p_s \rangle$ \textbf{to} $p_i^+(G)$ \tcp*[r]{resend failures}
     }
  }
}

\SetKwBlock{recvbcast}{receive $\langle \mathit{BCAST},\, m_j \rangle$:}{end}
\recvbcast
{
  \label{alg:lsc_recv_input}
  \lIf{$m_i \notin \Input$}{\tobroadcast{m_i}}
  \Input $\leftarrow$ \Input $\cup$ $\{m_j\}$\;
  \For{$m \in  \Input \text{ not already sent}$} {
    \textbf{send} $\langle \mathit{BCAST},\, m \rangle$ \textbf{to} $p_i^+(G)$
    \tcp*[r]{disseminate messages}
  }
  $V(\InputGraph[p_j]) \leftarrow \emptyset$\;
  \Terminate\;
}

\SetKwBlock{recvfail}{receive $\langle \mathit{FAIL},\, p_j,\, p_k \in p_j^+(G) \rangle$:}{end}
\recvfail
{
  \label{alg:lsc_recv_fail}
  \tcc{if $k=i$ then notification from local \FailDetector}
  \textbf{send} $\langle \mathit{FAIL},\, p_j,\, p_k \rangle$ \textbf{to} $p_i^+(G)$
  \tcp*[r]{disseminate failures}
  $\Fails \leftarrow \Fails \cup \{(p_j,p_k)\}$\;
  \ForEach{server \aServer}
  {
    \lIf{$p_j \notin V(\InputGraph[\aServer])$}{\Continue}
    
    \If{$p_j^+(\InputGraph[\aServer]) = \emptyset$}
    {\label{alg:lsc_no_succs}
      \tcc{maybe $p_j$ sent \aMessage to someone in $p_j^+(G)$ before failing}
      $\Suspected \leftarrow \{(p_j,p):p \in p_j^+(G) \setminus \{p_k\}\}$
      \tcp*[r]{FIFO queue}
      \ForEach{$(p_p,p) \in \Suspected$} {
          $\Suspected \leftarrow \Suspected \setminus \{(p_p,p)\}$\;
          \If{$p \notin V(\InputGraph[\aServer])$}
          {
	    $V(\InputGraph[\aServer]) \leftarrow V(\InputGraph[\aServer]) \cup \{p\}$\;\label{alg:lsc_add_vertex}
	    \If{$\exists (p,*) \in \Fails$} 
	    {\label{alg:lsc_already_failed}
	      $\Suspected \leftarrow \Suspected \cup \{(p,p_s):p_s \in p^+(G)\}\setminus\Fails$
	    }
          }
          $E(\InputGraph[\aServer]) \leftarrow E(\InputGraph[\aServer]) \cup \{(p_p,p)\}$\;\label{alg:lsc_add_edge}
      }       
    }
    \ElseIf{$p_k \in p_j^+(\InputGraph[\aServer])$}
    {\label{alg:lsc_succ_pk}
      \tcc{$p_k$ has not received \aMessage from $p_j$}
      $E(\InputGraph[\aServer]) \leftarrow E(\InputGraph[\aServer]) \setminus \{(p_j,p_k)\}$\;\label{alg:lsc_rm_edge}
      \ForEach{$p \in V(\InputGraph[\aServer])$ s.t. $\nexists \pi_{p_\ast,p}$ in $\InputGraph[\aServer]$} 
      {\label{alg:lsc_no_input}
	$V(\InputGraph[\aServer]) \leftarrow V(\InputGraph[\aServer]) \setminus \{p\}$
	\tcp*[r]{no input}
      }
    }  
    
    \If{$\forall p\in V(\InputGraph[\aServer]),\, (p,*)\in\Fails$}
    {\label{alg:lsc_no_dissemination}
      $V(\InputGraph[\aServer]) \leftarrow \emptyset$
      \tcp*[r]{no dissemination}
    }
  }  
  \Terminate\;
}
\caption{The \LSC{} algorithm; code executed by server $p_i$; see Table~\ref{tab:notations} for digraph notations.}
\label{alg:lsc}
\end{algorithm}

\textbf{Termination.}
\LSC{} adopts a novel early termination mechanism~(\cref{sec:earlyterm}).
To track the A-broadcast messages, each server $p_i$ stores an array $\mathbf{g_i}$
of $n$ digraphs, one for each server $p_\ast\in V(G)$; we refer to these as \emph{tracking digraphs}.
The vertices of each tracking digraph $\mathbf{g_i}[p_\ast]$ consist of the servers which (according to $p_i$) 
may have $m_\ast$. 
An edge $(p_j,p_k) \in E(\mathbf{g_i}[p_\ast])$ indicates $p_i$'s suspicion that $p_k$ received $m_\ast$ directly from $p_j$.
If $p_i$ has $m_\ast$, then $\mathbf{g_i}[p_\ast]$ is no longer needed; hence, $p_i$ removes all its vertices, 
i.e., $V(\mathbf{g_i}[p_\ast])=\emptyset$.
Initially, $V(\mathbf{g_i}[p_j])=\{p_j\},\,\forall p_j\neq p_i$ and $V(\mathbf{g_i}[p_i])=\emptyset$.
Server $p_i$ A-delivers all known messages (in a deterministic order) once all tracking digraphs are empty (line~\ref{alg:lsc_to_deliver}).

Figure \ref{fig:tracking_digraph} illustrates the message-driven changes to the tracking digraphs based on the binomial graph example in Section~\ref{sec:earlyterm}.
For clarity, we show only two of the messages being tracked by server $p_6$ (i.e., $m_0$ and $m_1$); both messages are tracked 
by updating $\mathbf{g_6}[p_0]$ and $\mathbf{g_6}[p_1]$, respectively. 
First, $p_6$ receives from $p_2$ a notification of $p_0$'s failure, which indicates that $p_2$ has not received $m_0$ directly from $p_0$~(\cref{sec:earlyterm}).
Yet, $p_0$ may have sent $m_0$ to its other successors; thus, $p_6$ adds them to $\mathbf{g_6}[p_0]$.
Next, $p_6$ receives from $p_5$ a notification of $p_0$'s failure---$p_5$ has not received $m_0$ directly from $p_0$ either and, thus, the edge $(p_0,p_5)$ is removed. 
Then, $p_6$ receives from $p_3$ a notification of $p_1$'s failure. Hence, $p_6$ extends both $\mathbf{g_6}[p_0]$ and $\mathbf{g_6}[p_1]$ with $p_1$'s successors (except $p_3$).
In addition, due to the previous notifications of $p_0$'s failure, $p_6$ extends $\mathbf{g_6}[p_1]$ with $p_0$'s successors (except $p_2$ and $p_5$).
Finally, $p_6$ receives $m_1$; thus, it removes all the vertices from $\mathbf{g_6}[p_1]$ (i.e., it stops tracking $m_1$).

\textbf{Receiving $\langle \mathit{BCAST},\, m_j \rangle$.}
When receiving an A-broadcast message $m_j$ (line~\ref{alg:lsc_recv_input}), server $p_i$ adds it to the 
set $M_i$ of known messages. Also, it A-broadcasts its own message $m_i$, in case it did not do so before. 
Then, it continues the dissemination of each known message through the network---$p_i$ sends 
all unique messages it has not already sent to its successors $p_i^+(G)$. Finally, $p_i$ removes all the vertices 
from the $\mathbf{g_i}[p_j]$ digraph; then, it checks whether the termination conditions are fulfilled.

\textbf{Receiving $\langle \mathit{FAIL},\, p_j,\, p_k\rangle$.}
When receiving a notification, R-broad\-cast by $p_k$, indicating $p_k$'s suspicion that $p_j$ has failed 
(line~\ref{alg:lsc_recv_fail}), $p_i$ disseminates it further.
Then, it adds a tuple $(p_j,p_k)$ to the set $F_i$ of received failure notifications. 
Finally, it updates the tracking digraphs in $\mathbf{g_i}$ that contain $p_j$ as a vertex. 

We distinguish between two cases, depending on whether this is the first notification of $p_j$'s failure received by $p_i$. 
If it is the first,  $p_i$  updates all $\mathbf{g_i}[p_\ast]$ containing $p_j$ as a vertex by adding $p_j$'s successors (from $G$) together 
with the corresponding edges. The rationale is, that $p_j$ may have sent $m_\ast$ to his successors, who are now in possession of it. 
Thus, we track the possible whereabouts of messages.
However, there are some exceptions: 
Server $p_k$ could not have received $m_\ast$ directly from $p_j$~(\cref{sec:earlyterm}).
Also, if a successor $p_f \notin V(\mathbf{g_i}[p_\ast])$ is added, which is already known to have failed, 
it  may have already received $m_\ast$ and sent it further. 
Hence, the successors of $p_f$ could be in possession of $m_\ast$ and are added to $\mathbf{g_i}[p_\ast]$ in the same way as described above (line~\ref{alg:lsc_already_failed}).

If $p_i$ is already aware of $p_j$'s failure (i.e., the above process already took place),
the new failure notification informs $p_i$, that $p_k$ (the origin of the notification) has not received $m_\ast$ 
from $p_j$---because $p_k$ would have sent it before sending the failure notification. 
Thus, the edge ($p_j$, $p_k$) can be removed from $\mathbf{g_i}[p_\ast]$ (line~\ref{alg:lsc_succ_pk}).

In the end, $p_i$ prunes $\mathbf{g_i}[p_\ast]$ by removing the servers no longer of interest in tracking $m_\ast$. 
First, $p_i$ removes every server $p$ for which there is no path (in $\mathbf{g_i}[p_\ast]$) from $p_\ast$ to $p$, 
as $p$ could not have received $m_\ast$ from any of the servers in $\mathbf{g_i}[p_\ast]$ (line~\ref{alg:lsc_no_input}).
Then, if $\mathbf{g_i}[p_\ast]$ contains only servers already known to have failed, $p_i$ prunes it entirely---no 
non-faulty server has $m_\ast$ (line~\ref{alg:lsc_no_dissemination}).

\textbf{Iterating \LSC{}.}
Executing subsequent rounds of \LSC{} requires the correct handling of failures. 
Since different servers may end and begin rounds at different times, \LSC{} employs a consistent mechanism of tagging 
servers as failed: At the end of each round, all servers whose messages were not A-delivered are tagged as failed by 
all the other servers (line~\ref{alg:lsc_server_remove}). 
As every non-faulty server agrees on the A-delivered messages, this ensures a consistent view of failed servers. 
In the next round, every server resends the failure notifications, except those of 
servers already tagged as failed (line~\ref{alg:lsc_fail_resend}). 
Thus, only the tags and the necessary resends need to be carried over from the previous round.
Moreover, each message contains the sequence number~$R$ of the round in which it was first sent.
Thus, all messages can be uniquely identified, i.e., $\langle \mathit{BCAST},\, m_j \rangle$ by $(R, p_j)$ tuples 
and $\langle \mathit{FAIL},\, p_j,\, p_k\rangle$ by $(R, p_j, p_k)$ tuples,
which allows for multiple rounds to coexist.  

\textbf{Initial bootstrap and dynamic membership.}
To bootstrap \LSC{}, we require a centralized service, such as  ZooKeeper~\cite{Hunt:2010:ZWC:1855840.1855851}:
The system must decide on the initial configuration---the identity of the $n$ servers, the fault tolerance $f$ 
and the digraph $G$.
Once \LSC{} starts, any further reconfigurations are agreed upon via atomic broadcast.
This includes topology reconfigurations and membership changes, i.e., servers leaving and joining the system. 
In contrast to leader-based approaches, where such changes may necessitate a leader election, in \LSC{}, 
dynamic membership is handled directly by the algorithm.

\subsection{Correctness}
\label{sec:lsc_correct}

To prove \LSC{}'s correctness, we show that the four properties of 
(non-uniform) atomic broadcast are guaranteed~(\cref{sec:TObcast}).
Clearly, the integrity property holds: Every server $p_i$ executes \todeliver{} only once for 
each message in the set $M_i$, which contains only messages A-broadcast by some servers.
To show that the validity property holds, it is sufficient to prove that the algorithm 
terminates
(see Lemma~\ref{lemma:lsc_termination}). 
To show that both the agreement and the total order properties hold, 
it is sufficient to prove \emph{set agreement}---when the algorithm terminates, 
all non-faulty servers have the same set of known messages (see Lemma~\ref{lemma:lsc_agreement}).
To prove termination and set agreement, we introduce the following lemmas:

\begin{lemma}
\label{lemma:lsc_transmit_path}
Let $p_i$ be a non-faulty server; let $p_j \neq p_i$ be a server; 
let $\pi_{p_j,p_i}=\left(a_{1},\ldots,a_{d}\right)$ be a path (in digraph $G$) 
from $p_j$ to $p_i$. If $p_j$ knows a message $m$ (either its own or received), 
then, $p_i$ eventually receives either $\langle \mathit{BCAST},\, m \rangle$ 
or $\langle \mathit{FAIL},\, a_k,\, a_{k+1}\rangle$ with $1 \leq k < d$.
\end{lemma}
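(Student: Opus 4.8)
The plan is to prove this by induction on the length $d$ of the path $\pi_{p_j,p_i}$. The key intuition, drawn from the discussion of early termination in \cref{sec:earlyterm}, is that \tobroadcast{} and \rbroadcast{} use the same edges for dissemination, and the algorithm forces any server to send out every message it holds (both \textit{BCAST} and \textit{FAIL}) to all of its successors \emph{before} it can do anything else on those successors' behalf --- in particular before it can be reported as failed without having forwarded. So along any path, a message's ``presence'' propagates one hop at a time unless a hop is broken by a failure, and a broken hop always produces a \textit{FAIL} notification that itself propagates along the remaining suffix of the path.

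First, for the base case $d = 2$ (a single edge $(p_j, p_i) \in E(G)$, so $p_i \in p_j^+(G)$): since $p_j$ knows $m$, consider the moment $p_j$ first learned $m$. By inspection of Algorithm~\ref{alg:lsc}, whether $m = m_j$ is $p_j$'s own message (line~\ref{alg:lsc_start}) or received via \textit{BCAST} (line~\ref{alg:lsc_recv_input}), $p_j$ sends $\langle \mathit{BCAST},\, m \rangle$ to all of $p_j^+(G)$, hence to $p_i$, either immediately or (if $p_j$ subsequently fails) at the latest as part of that send loop. If this send completes, $p_i$ receives $\langle \mathit{BCAST},\, m \rangle$ by reliable communication and we are done. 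If $p_j$ fails, then by the perfect failure detector $\mathcal{P}$ (completeness), $p_i \in p_j^+(G)$ eventually suspects $p_j$ and its own FD generates $\langle \mathit{FAIL},\, p_j,\, p_i \rangle$, which $p_i$ processes at line~\ref{alg:lsc_recv_fail}; that is exactly $\langle \mathit{FAIL},\, a_1,\, a_2 \rangle$. So one of the two outcomes always occurs.

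For the inductive step, suppose the claim holds for all paths of length $< d$ and let $\pi_{p_j,p_i} = (a_1, \ldots, a_d)$ with $a_1 = p_j$, $a_d = p_i$. Apply the base-case argument to the first edge $(a_1, a_2)$: since $a_1$ knows $m$, either $a_2$ receives $\langle \mathit{BCAST},\, m \rangle$, or some non-faulty successor of $a_1$ (in particular, if $a_2$ is non-faulty, $a_2$ itself; in general the first non-faulty successor reached by $\mathcal{P}$) eventually generates and thus $a_2$ eventually receives $\langle \mathit{FAIL},\, a_1,\, a_2 \rangle$. In the first subcase, $a_2$ now knows $m$; if $a_2$ is non-faulty, apply the induction hypothesis to the suffix path $(a_2, \ldots, a_d)$ (length $d-1$) to get that $p_i$ receives $\langle \mathit{BCAST},\, m\rangle$ or some $\langle \mathit{FAIL},\, a_k,\, a_{k+1}\rangle$ with $2 \le k < d$. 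If $a_2$ is faulty, one must instead argue via $a_2$'s own failure notification combined with a recursive descent --- this is the delicate part. In the second subcase, $\langle \mathit{FAIL},\, a_1,\, a_2 \rangle$ is itself R-broadcast and reliably disseminated; since $a_2, \ldots, a_d$ is a path in $G$ and each intermediate non-faulty server forwards \textit{FAIL} messages to all successors (line~\ref{alg:lsc_recv_fail}), a second (inner) induction on the suffix shows $p_i$ eventually receives either $\langle \mathit{FAIL},\, a_1,\, a_2\rangle$ directly or $\langle \mathit{FAIL},\, a_k,\, a_{k+1}\rangle$ for some $2 \le k < d$ (if an intermediate hop on the suffix also breaks).

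The main obstacle is the bookkeeping when intermediate servers on the path fail: a single path may have several broken hops, and one must be careful that the disjunction ``$\langle \mathit{BCAST},\, m\rangle$ \emph{or} some $\langle \mathit{FAIL},\, a_k,\, a_{k+1}\rangle$'' is genuinely maintained at each recursive step rather than degrading into ``$p_i$ receives some \textit{FAIL} about some server not on the path.'' The clean way to handle this is a \emph{single} induction on $d$ where the statement is phrased exactly as in the lemma, and in the step one performs a case analysis on the smallest index $\ell$ such that either $a_\ell$ never forwards $m$ to $a_{\ell+1}$ or $a_\ell$ is the last server on the path to hold $m$: up to $a_\ell$ everything is forwarded by the base-case reasoning applied hop-by-hop, at $a_\ell$ a \textit{FAIL} about $(a_\ell, a_{\ell+1})$ is born (by completeness of $\mathcal{P}$), and from $a_{\ell+1}$ onward one applies the induction hypothesis to the strictly shorter path $(a_{\ell+1},\ldots,a_d)$ with the message replaced by that \textit{FAIL} notification (noting \textit{FAIL}s disseminate along $G$-edges identically to \textit{BCAST}s, so the same lemma structure applies). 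I expect this reduction --- recognizing that \textit{FAIL} propagation obeys the very same transmission-along-a-path property --- to be the crux, after which the induction closes routinely.
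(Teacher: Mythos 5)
Your proposal follows essentially the same route as the paper's proof: both are hop-by-hop propagation arguments along $\pi_{p_j,p_i}$, resting on the observation that a $\mathit{FAIL}$ notification about a path edge is disseminated over the same edges of $G$ as the $\mathit{BCAST}$ it replaces, so the disjunction in the lemma is preserved from one vertex of the path to the next. The paper phrases this as a three-way case split at every inner vertex $a_k$ (fail / detect the predecessor's failure / forward what was received from the predecessor), whereas you phrase it as induction on $d$ with a recursive descent onto suffixes; these are the same argument in different clothing, and your identification of ``$\mathit{FAIL}$s propagate along the path exactly like $\mathit{BCAST}$s'' as the crux is exactly right.

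One step in your ``clean way'' would fail as written: you claim that at the first broken hop $\ell$ ``a $\mathit{FAIL}$ about $(a_\ell, a_{\ell+1})$ is born (by completeness of $\mathcal{P}$).'' Completeness only guarantees that the \emph{non-faulty} successors of a failed server eventually suspect it; the notification $\langle \mathit{FAIL},\, a_\ell,\, a_{\ell+1}\rangle$ is generated specifically by $a_{\ell+1}$'s own FD, and if $a_{\ell+1}$ is itself faulty and crashes before its FD fires, that notification is never created --- and the notifications $\langle \mathit{FAIL},\, a_\ell,\, p\rangle$ produced by $a_\ell$'s other successors do not concern an edge of the path, so your recursion is left with no message with which to invoke the induction hypothesis on the suffix. (Your base case is immune because there $a_2=p_i$ is non-faulty by hypothesis; the problem arises only for faulty inner vertices, i.e., the very case you flagged as delicate.) The fix is what the paper's per-vertex case analysis gives for free: when a maximal run $a_\ell,\ldots,a_j$ of consecutive path vertices fails, the first survivor $a_{j+1}$ --- which exists because $a_d=p_i$ is non-faulty --- generates $\langle \mathit{FAIL},\, a_j,\, a_{j+1}\rangle$, which is still an edge of the path, and the propagation argument restarts from there. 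Folding this extra case into your induction closes the argument.
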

\begin{proof}
Server $p_j$ can either fail or send $m$ to $a_2$. 
Further, for each inner server $a_{k} \in \pi_{p_j,p_i}, 1<k<d$, we distinguish three scenarios:
(1) $a_{k}$ fails; (2) $a_{k}$ detects the failure of its predecessor on the path; 
or (3) $a_{k}$ further sends the message received from its predecessor on the path.
The message can be either $\langle \mathit{BCAST},\, m \rangle$ or 
$\langle \mathit{FAIL},\, a_l,\, a_{l+1}\rangle$ with $1 \leq l < k$.
Thus, $p_i$ eventually receives either $\langle \mathit{BCAST},\, m \rangle$ or 
$\langle \mathit{FAIL},\, a_k,\, a_{k+1}\rangle$ with $1 \leq k < d$.
Figure~\ref{fig:input_spread} shows, in a tree-like fashion, what messages can be transmitted 
along a three-server path.
\end{proof}

\begin{figure}[!tp]
\centering
\scalebox{0.65}{\input{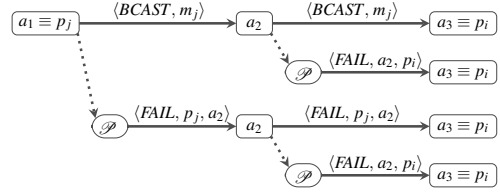}}
\caption{Possible messages along a three-server path. Dotted arrows indicate failure detection.}
\label{fig:input_spread}
\end{figure}

\begin{lemma}
\label{lemma:lsc_transmit}
Let $p_i$ be a non-faulty server; let $p_j \neq p_i$ be a server. 
If $p_j$ knows a message $m$ (either its own or received), 
then $p_i$ eventually receives either the message $m$ or a notification of $p_j$'s failure.   
\end{lemma}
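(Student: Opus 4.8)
The plan is to reduce this to Lemma~\ref{lemma:lsc_transmit_path} together with the standing assumption $f < k(G)$. Fix a non-faulty server $p_i$ and a server $p_j \neq p_i$ that knows $m$, and suppose $p_i$ never receives $\langle \mathit{BCAST},\, m \rangle$; I will show that $p_i$ then necessarily receives a notification of $p_j$'s failure, which establishes the lemma.

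First I would invoke Menger's theorem (and the definition of vertex-connectivity) to obtain at least $k(G)$ internally vertex-disjoint paths $\pi^1,\ldots,\pi^{k(G)}$ in $G$ from $p_j$ to $p_i$; write $\pi^\ell=(a^\ell_1,\ldots,a^\ell_{d_\ell})$ with $a^\ell_1=p_j$ and $a^\ell_{d_\ell}=p_i$. Applying Lemma~\ref{lemma:lsc_transmit_path} to each $\pi^\ell$, and using the assumption that $p_i$ never receives $m$, I get that for every $\ell$ the server $p_i$ eventually receives $\langle \mathit{FAIL},\, a^\ell_{k_\ell},\, a^\ell_{k_\ell+1}\rangle$ for some index $1\le k_\ell<d_\ell$.

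The argument then splits. If $k_\ell=1$ for some $\ell$, then $a^\ell_1=p_j$ and $a^\ell_2\in p_j^+(G)$, so $\langle \mathit{FAIL},\, p_j,\, a^\ell_2\rangle$ is exactly a notification of $p_j$'s failure, and we are done. Otherwise $k_\ell\ge 2$ for all $\ell$, so every $a^\ell_{k_\ell}$ is an \emph{internal} vertex of $\pi^\ell$; since the paths are internally vertex-disjoint, the servers $a^1_{k_1},\ldots,a^{k(G)}_{k_{k(G)}}$ are pairwise distinct. By the accuracy of $\mathcal{P}$, a received notification $\langle \mathit{FAIL},\, q,\, \cdot\rangle$ implies $q$ has actually failed, so at least $k(G)$ distinct servers have failed---contradicting $f<k(G)$. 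Hence this case is impossible, and $p_i$ must receive a notification of $p_j$'s failure.

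The step I expect to be the main obstacle is the careful accounting in the Menger step: one must be sure that ``internally vertex-disjoint'' really yields $k(G)$ \emph{distinct} failed servers (in particular, the degenerate path that is a single edge $(p_j,p_i)$, present when $p_j\in p_i^-(G)$, forces $k_\ell=1$ and so lands in the first, easy case), and one must explicitly lean on the accuracy of $\mathcal{P}$ to convert ``$p_i$ received $\langle \mathit{FAIL},\, q,\, \cdot\rangle$'' into ``$q$ failed.'' An alternative, not using Lemma~\ref{lemma:lsc_transmit_path}, would case-split on whether $p_j$ is faulty: if not, $p_j$ forwards $m$ to $p_j^+(G)$ and a path of non-faulty servers through $G_F$ relays it to $p_i$; if so, completeness of $\mathcal{P}$ makes some non-faulty successor of $p_j$ R-broadcast $\langle \mathit{FAIL},\, p_j,\, \cdot\rangle$, again relayed through $G_F$. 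I would nonetheless prefer the first route, since it reuses Lemma~\ref{lemma:lsc_transmit_path} and sidesteps a separate argument that non-faulty servers forward every known message before terminating.
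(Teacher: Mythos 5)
Your proposal is correct and follows essentially the same route as the paper's proof: invoke the $k(G)$ vertex-disjoint paths guaranteed by vertex-connectivity, apply Lemma~\ref{lemma:lsc_transmit_path} to each, and observe that if none of the resulting failure notifications concerns $p_j$ itself then they name $k(G)$ distinct (internal) failed servers, contradicting $f<k(G)$. The only difference is cosmetic—you make explicit the appeal to the accuracy of $\mathcal{P}$ and the degenerate single-edge case, both of which the paper leaves implicit.
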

\begin{proof}
If $p_i$ receives $m$, then the proof is done. In the case $p_i$ does not receive $m$, we assume it 
does not receive a notification of $p_j$'s failure either. 
Due to $G$'s vertex-connectivity, there are at least $k(G)$ vertex-disjoint paths $\pi_{p_j,p_i}$.
For each of these paths, $p_i$ must receive notifications of some inner vertex failures (cf.~Lemma~\ref{lemma:lsc_transmit_path}). 
Since the paths are vertex-disjoint, each notification indicates a different server failure. 
However, this contradicts the assumption that $f < k(G)$.
\end{proof}

\begin{corollary}
\label{corollary:lsc_transmit}
Let $p_i$ be a non-faulty server; let $p_j \neq p_i$ be a server.
If $p_j$ receives a message, then $p_i$ eventually receives either the same message or 
a notification of $p_j$'s failure.
\end{corollary}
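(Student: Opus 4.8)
The plan is to reduce the statement to Lemma~\ref{lemma:lsc_transmit}, treating separately the two kinds of messages that circulate in \LSC{}. A server receives either a $\langle \mathit{BCAST},\, m \rangle$ message or a $\langle \mathit{FAIL},\, p_a,\, p_b \rangle$ notification, so I would first split on which of the two $p_j$ received, and then argue each case.

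For the first case, if $p_j$ receives $\langle \mathit{BCAST},\, m \rangle$ then $p_j$ knows the message $m$ (a \emph{received} message in the terminology of Lemma~\ref{lemma:lsc_transmit}); applying that lemma verbatim yields that $p_i$ eventually receives either $\langle \mathit{BCAST},\, m \rangle$ or a notification of $p_j$'s failure, which is exactly the claim. This case is essentially immediate once one observes that ``receives'' implies ``knows''.

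The remaining case, where $p_j$ receives a notification $\langle \mathit{FAIL},\, p_a,\, p_b \rangle$, I would handle by rerunning the arguments of Lemmas~\ref{lemma:lsc_transmit_path} and~\ref{lemma:lsc_transmit} with $\langle \mathit{FAIL},\, p_a,\, p_b \rangle$ playing the role of $\langle \mathit{BCAST},\, m \rangle$: upon receipt $p_j$ forwards the notification to $p_j^+(G)$ (line~\ref{alg:lsc_recv_fail}), and along any path $\pi_{p_j,p_i}=(a_1,\ldots,a_d)$ the same three-way case analysis at each inner vertex $a_k$ (it fails; it detects the failure of its path-predecessor; or it forwards what it received from its path-predecessor) shows that $p_i$ eventually receives either $\langle \mathit{FAIL},\, p_a,\, p_b \rangle$ or some $\langle \mathit{FAIL},\, a_k,\, a_{k+1} \rangle$ with $1 \le k < d$. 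Then I would invoke vertex-connectivity exactly as in Lemma~\ref{lemma:lsc_transmit}: there are $k(G)$ vertex-disjoint paths from $p_j$ to $p_i$, and if on none of them does $p_i$ obtain $\langle \mathit{FAIL},\, p_a,\, p_b \rangle$ while $p_i$ also never receives a notification of $p_j$'s failure (so the sub-case $k=1$, i.e.\ the notification $\langle \mathit{FAIL},\, p_j,\, a_2\rangle$, never occurs), then each of the $k(G)$ paths contributes a failure notification about a distinct inner vertex different from $p_j$, giving at least $k(G)$ failures and contradicting $f < k(G)$.

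The only real subtlety — and the step I expect to need the most care — is recognizing that the corollary is meant to cover \emph{both} message types, and checking that the proofs of the two preceding results transfer to a propagated $\langle \mathit{FAIL} \rangle$ notification without any change; in particular that the $k=1$ sub-case of the path argument is precisely ``a notification of $p_j$'s failure'', so that the dichotomy asserted in the statement is the correct one. Everything else is a direct appeal to Lemmas~\ref{lemma:lsc_transmit_path} and~\ref{lemma:lsc_transmit}.
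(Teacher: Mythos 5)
Your proof is correct and matches the paper's intent: the paper states this as an immediate corollary of Lemma~\ref{lemma:lsc_transmit} (receiving a message implies knowing it) and offers no separate argument. Your extra care in checking that the path and connectivity arguments transfer verbatim when the received message is a $\langle \mathit{FAIL},\, p_a,\, p_b\rangle$ notification rather than a $\langle \mathit{BCAST},\, m\rangle$ is warranted---the corollary is later invoked (in Lemma~\ref{lemma:lsc_rm_edges}) precisely for a server whose received message may be a failure notification---and that transfer goes through exactly as you describe, since failure notifications are R-broadcast over the same overlay $G$.
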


\begin{lemma}
\label{lemma:lsc_rm_edges}
Let $p_i$ be a server; let $\mathbf{g_i}[p_j]$ be a tracking digraph that can no longer be pruned.
If $E(\mathbf{g_i}[p_j]) \neq \emptyset$, then $p_i$ eventually removes an edge from $E(\mathbf{g_i}[p_j])$.
\end{lemma}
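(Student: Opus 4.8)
I would prove this by contradiction: assume $p_i$ never removes an edge from $E(\mathbf{g_i}[p_j])$, and then exhibit a specific edge of $\mathbf{g_i}[p_j]$ that Algorithm~\ref{alg:lsc} is nevertheless forced to remove. (I take $p_i$ non-faulty, as it is whenever this lemma is applied; a faulty $p_i$ eventually crashes and the statement is read up to that point.)

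The first ingredient is a structural invariant about the edges of the tracking digraphs, proved by induction on the steps of Algorithm~\ref{alg:lsc}: at all times, for every $\mathbf{g_i}[p_\ast]$ and every edge $(u,v)\in E(\mathbf{g_i}[p_\ast])$ we have $v\in u^+(G)$, and $p_i$ has already received at least one failure notification naming $u$ (so, by the accuracy of $\mathcal{P}$, $u$ has actually failed). Edges are added only at line~\ref{alg:lsc_add_edge}, and the queue there is built so that every tuple in it has a first component already in the first coordinate of $F_i$ --- the subject of the current notification, or a transitively failed successor inserted at line~\ref{alg:lsc_already_failed} --- and a second component that is a $G$-successor of the first; since $F_i$ only grows and the endpoints of an edge never change, the invariant persists. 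This bookkeeping is routine but is the step I expect to be the most error-prone.

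Next I would use the hypothesis that $\mathbf{g_i}[p_j]$ can no longer be pruned to locate a convenient edge. Since line~\ref{alg:lsc_no_dissemination} does not apply, not all vertices of $\mathbf{g_i}[p_j]$ have failed, so $\mathbf{g_i}[p_j]$ contains a non-faulty vertex; since line~\ref{alg:lsc_no_input} does not apply, every vertex of $\mathbf{g_i}[p_j]$ is reachable from $p_j$ inside $\mathbf{g_i}[p_j]$. If the only non-faulty vertex were $p_j$ itself, then by the invariant $p_j$ would have no outgoing edge, so reachability would force $V(\mathbf{g_i}[p_j])=\{p_j\}$ and hence $E(\mathbf{g_i}[p_j])=\emptyset$, contradicting $E(\mathbf{g_i}[p_j])\neq\emptyset$; thus there is a non-faulty vertex $w\neq p_j$, and the last edge $(u,w)$ of a path from $p_j$ to $w$ in $\mathbf{g_i}[p_j]$ satisfies: $w$ is non-faulty, $u$ has failed, and $w\in u^+(G)$.

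Finally, since $w\in u^+(G)$, server $w$ receives $u$'s heartbeats; as $w$ is non-faulty and $u$ has failed, completeness of $\mathcal{P}$ guarantees that $w$ eventually suspects $u$ and R-broadcasts $\langle \mathit{FAIL},\,u,\,w\rangle$, which (by the reliable dissemination of failure notifications, exactly as argued for Lemma~\ref{lemma:lsc_transmit}) $p_i$ eventually receives. Under the contradiction hypothesis, $\mathbf{g_i}[p_j]$ is never emptied --- emptying it would remove its edges, which are present by assumption --- so $p_i$ never receives $m_j$; together with the hypothesis that $\mathbf{g_i}[p_j]$ can no longer be pruned, this leaves line~\ref{alg:lsc_rm_edge} as the only mechanism that could delete $(u,w)$, and that too is excluded by assumption. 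Hence $(u,w)$ is still an edge of $\mathbf{g_i}[p_j]$ when $p_i$ processes $\langle \mathit{FAIL},\,u,\,w\rangle$. At that moment $w\in u^+(\mathbf{g_i}[p_j])$, so in the handler starting at line~\ref{alg:lsc_recv_fail} the test of line~\ref{alg:lsc_no_succs} fails and the branch at line~\ref{alg:lsc_succ_pk} is taken, and since $w\in u^+(\mathbf{g_i}[p_j])$ it executes line~\ref{alg:lsc_rm_edge}, removing $(u,w)$ --- the desired contradiction. The only genuine obstacles are pinning down the edge invariant and checking that the branch conditions of the $\langle \mathit{FAIL}\rangle$ handler line up with this edge; everything else is forced.
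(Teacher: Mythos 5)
Your proof is correct and follows the same overall strategy as the paper's: assume no edge is ever removed, use the ``can no longer be pruned'' hypothesis (lines~\ref{alg:lsc_no_input} and~\ref{alg:lsc_no_dissemination}) to locate a vertex of $\mathbf{g_i}[p_j]$ whose failure $p_i$ never learns of, and then force a message to $p_i$ whose handling must delete an edge. The difference is in how that message is produced. The paper constructs a path $\left(a_1,\ldots,a_d\right)$ from $p_j$ to such a vertex $p$ in which every server except $p$ is known to have failed, invokes Lemma~\ref{lemma:lsc_transmit_path} to conclude that $p$ eventually receives either $\langle \mathit{BCAST},\, m_j \rangle$ or some $\langle \mathit{FAIL},\, a_k,\, a_{k+1}\rangle$, and uses Corollary~\ref{corollary:lsc_transmit} to carry that message to $p_i$; the closing case analysis is the same as yours. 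You instead jump straight to the last edge $(u,w)$ of such a path and use the completeness of $\mathcal{P}$ at $w$ (a non-faulty $G$-successor of the failed $u$) to generate the specific notification $\langle \mathit{FAIL},\, u,\, w\rangle$, which is slightly more direct but obliges you to state explicitly the invariant that tracking-digraph edges mirror $G$-edges with failed sources --- a fact the paper uses silently when it treats the path in $\mathbf{g_i}[p_j]$ as a path in $G$. Both versions carry the same residual bookkeeping obligation, namely that $\langle \mathit{FAIL},\, u,\, w\rangle$ cannot already have been consumed before the edge $(u,w)$ was created; your analysis of the queue construction and of line~\ref{alg:lsc_already_failed} discharges this correctly, and the paper simply does not dwell on it.
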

\begin{proof}
We assume that $p_i$ removes no edge from $E(\mathbf{g_i}[p_j])$.
Clearly, the following statements are true: 
(1) $V(\mathbf{g_i}[p_j]) \neq \emptyset$ (since $E(\mathbf{g_i}[p_j]) \neq \emptyset$); 
(2) $p_j \in V(\mathbf{g_i}[p_j])$ (since $\mathbf{g_i}[p_j]$ can no longer be pruned); 
and (3) $p_j$ is known to have failed (since $V(\mathbf{g_i}[p_j])\neq\{p_j\}$).
Let $p \in V(\mathbf{g_i}[p_j])$ be a server such that $p_i$ receives no notification of $p$'s failure.
The reason $p$ exists is twofold: (1) the maximum number of failures is bounded; and 
(2) $\mathbf{g_i}[p_j]$ can no longer be pruned (line~\ref{alg:lsc_no_dissemination}).
Then, we can construct a path $\pi_{p_j,p}=\left(a_{1},\ldots,a_{d}\right)$ in $\mathbf{g_i}[p_j]$ 
such that every server along the path, except for $p$, is known to have failed (line~\ref{alg:lsc_no_input}).
Eventually, $p$ receives either $\langle \mathit{BCAST},\, m_j \rangle$ or 
$\langle \mathit{FAIL},\, a_k,\, a_{k+1}\rangle$ with $1 \leq k < d$ (cf.~Lemma~\ref{lemma:lsc_transmit_path}).
Since $p_i$ receives no notification of $p$'s failure, the message received by $p$ eventually arrives at $p_i$ 
(cf.~Corollary~\ref{corollary:lsc_transmit}).
On the one hand, if $p_i$ receives $\langle \mathit{BCAST},\, m_j \rangle$, then all edges are removed from $E(\mathbf{g_i}[p_j])$;
this leads to a contradiction.
On the other hand, if $p_i$ receives $\langle \mathit{FAIL},\, a_k,\, a_{k+1}\rangle$, then the edge $(a_k,a_{k+1})$
is removed from $E(\mathbf{g_i}[p_j])$ (line~\ref{alg:lsc_rm_edge}); this also leads to a contradiction.
\end{proof}

\begin{lemma}
\label{lemma:lsc_termination}
(Termination) Let $p_i$ be a non-faulty server. Then, $p_i$ eventually terminates.
\end{lemma}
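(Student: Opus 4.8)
The plan is to reduce termination of $p_i$ to the statement that every tracking digraph $\mathbf{g_i}[p_\ast]$ eventually becomes empty: once all of them are empty, $p_i$ passes the test on line~\ref{alg:lsc_termin_cond} and A-delivers, i.e., terminates. Note also that an empty tracking digraph is never touched again, since the receive-$\mathit{FAIL}$ handler skips any $\mathbf{g_i}[p_\ast]$ for which $p_j \notin V(\mathbf{g_i}[p_\ast])$. For $p_\ast = p_i$ the digraph is empty from the start, and if $p_i$ ever receives $\langle \mathit{BCAST},\, m_\ast \rangle$ then $V(\mathbf{g_i}[p_\ast])$ is set to $\emptyset$ in the receive-$\mathit{BCAST}$ handler. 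So the only nontrivial case is a server $p_\ast \neq p_i$ whose message $p_i$ never receives.

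In that case I would first show that $p_\ast$ is faulty and that $p_i$ eventually receives a notification of $p_\ast$'s failure. If $p_\ast$ were non-faulty, it would eventually receive some message (the initial A-broadcast reaches it, or a failure notification of the initiator does, by Lemma~\ref{lemma:lsc_transmit}), hence A-broadcast $m_\ast$; then Lemma~\ref{lemma:lsc_transmit} applied to the pair $(p_\ast, p_i)$, together with the accuracy of $\mathcal{P}$ (which rules out a failure notification of the non-faulty $p_\ast$), would force $p_i$ to receive $m_\ast$ --- a contradiction. Since $p_\ast$ is faulty and $|p_\ast^+(G)| \geq k(G) > f$, at least one successor $p_k$ of $p_\ast$ is non-faulty; by completeness of $\mathcal{P}$, $p_k$ R-broadcasts $\langle \mathit{FAIL},\, p_\ast,\, p_k \rangle$, and Lemma~\ref{lemma:lsc_transmit} applied to $(p_k, p_i)$ (again using accuracy to exclude a notification of $p_k$'s failure) guarantees $p_i$ eventually receives it.

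Next I would invoke finiteness. There are only finitely many distinct messages in the round --- $n$ possible $\mathit{BCAST}$ messages and at most $\sum_{p}|p^+(G)|$ possible $\mathit{FAIL}$ messages --- each arriving from at most $d(G)$ predecessors, so $p_i$ handles only finitely many messages and eventually reaches a state in which no $\mathbf{g_i}[p_\ast]$ ever changes again. In that final state $\mathbf{g_i}[p_\ast]$ can no longer be pruned, so by Lemma~\ref{lemma:lsc_rm_edges} it must have $E(\mathbf{g_i}[p_\ast]) = \emptyset$ (otherwise an edge would still be removed, contradicting stability). It then remains to argue that an edge-free stable $\mathbf{g_i}[p_\ast]$ is in fact empty: every vertex other than $p_\ast$ is inserted together with an incoming edge (lines~\ref{alg:lsc_add_vertex}--\ref{alg:lsc_add_edge}) and is pruned by line~\ref{alg:lsc_no_input} as soon as its last incoming edge is removed by line~\ref{alg:lsc_rm_edge}, so only $p_\ast$ can survive; and since $p_\ast$ is known to have failed, the handling of the last failure notification affecting $\mathbf{g_i}[p_\ast]$ triggers line~\ref{alg:lsc_no_dissemination}, which empties it. Hence every tracking digraph eventually becomes (and stays) empty, and $p_i$ terminates.

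The main obstacle I expect is precisely this last step: Lemma~\ref{lemma:lsc_rm_edges} cleanly gives "the edge set eventually empties," but bridging to "the vertex set eventually empties" needs careful bookkeeping of exactly when the pruning rules on lines~\ref{alg:lsc_no_input} and \ref{alg:lsc_no_dissemination} fire during failure-notification processing, together with the invariants that $p_\ast$ is never removed from $V(\mathbf{g_i}[p_\ast])$ and that every non-$p_\ast$ vertex always carries a witnessing incoming edge until it is pruned.
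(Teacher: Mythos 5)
Your proof is correct and follows essentially the same route as the paper's: reduce termination to every tracking digraph eventually emptying, use Lemma~\ref{lemma:lsc_transmit} to obtain either $m_\ast$ or a notification of $p_\ast$'s failure, and then apply Lemma~\ref{lemma:lsc_rm_edges} repeatedly to drain the edge set. You are merely more explicit than the paper in two spots it glosses over---the finitely-many-messages stabilization argument and the final bridge from an empty edge set to an empty vertex set via lines~\ref{alg:lsc_no_input} and~\ref{alg:lsc_no_dissemination}---which is a tightening of the same argument rather than a different one.
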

\begin{proof}
If $V(\mathbf{g_i}[p])=\emptyset,\,\forall p$, then the proof is done (line~\ref{alg:lsc_termin_cond}).
We assume $\exists p_j$ such that $V(\mathbf{g_i}[p_j]) \neq \emptyset$ and $\mathbf{g_i}[p_j]$ can no longer be pruned. 
Clearly, $p_j \in V(\mathbf{g_i}[p_j])$. 
Server $p_i$ receives either $m_j$ or a notification of $p_j$'s failure (cf.~Lemma~\ref{lemma:lsc_transmit}).
If $p_i$ receives $m_j$, then all servers are removed from $V(\mathbf{g_i}[p_j])$, 
which contradicts $V(\mathbf{g_i}[p_j]) \neq \emptyset$.
We assume $p_i$ receives a notification of $p_j$'s failure; then, $p_j^+(\mathbf{g_i}[p_j]) \neq \emptyset$ 
(since $\mathbf{g_i}[p_j]$ can no longer be pruned); also, $E(\mathbf{g_i}[p_j]) \neq \emptyset$.
By repeatedly applying the result of Lemma~\ref{lemma:lsc_rm_edges}, it results that $p_i$ eventually 
removes all edges from $\mathbf{g_i}[p_j]$. As a result, $\mathbf{g_i}[p_j]$ is eventually completely pruned, 
which contradicts $V(\mathbf{g_i}[p_j]) \neq \emptyset$.
\end{proof}

\begin{lemma}
\label{lemma:lsc_agreement}
(Set agreement) Let $p_i$ and $p_j$ be any two non-faulty servers. 
Then, after \LSC{}'s termination, $M_i=M_j$.
\end{lemma}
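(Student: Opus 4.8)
The plan is to prove the sharper statement that, for \emph{every} non-faulty server $p_i$, the value of $M_i$ when $p_i$ terminates equals the set $U$ of all messages that are, at some point, known by some non-faulty server; since $U$ does not depend on $i$, this yields $M_i=M_j$. The inclusion $M_i\subseteq U$ is immediate, as $p_i$ is non-faulty. For $U\subseteq M_i$, fix $m\in U$, write $p_\ast=\mathit{sender}(m)$, and let $q$ be a non-faulty server that knows $m$. By \cref{lemma:lsc_termination}, $p_i$ terminates, which by line~\ref{alg:lsc_termin_cond} requires $V(\mathbf{g_i}[p_\ast])=\emptyset$. Inspecting Algorithm~\ref{alg:lsc}, $\mathbf{g_i}[p_\ast]$ is emptied only when $p_i$ receives $\langle\mathit{BCAST},m\rangle$ --- in which case $m\in M_i$ and we are done --- or by line~\ref{alg:lsc_no_dissemination}. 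Hence it suffices to prove that line~\ref{alg:lsc_no_dissemination} never empties $\mathbf{g_i}[p_\ast]$ while $p_i$ has not yet received $m$, i.e.\ that $V(\mathbf{g_i}[p_\ast])$ always contains a server $p_i$ has not been notified is faulty; \cref{lemma:lsc_transmit} already gives the intuition that $m$ must reach $p_i$, but the delicate point is ruling out premature termination.

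For this, fix once and for all the path $P=(p_\ast=q_0,\dots,q_\ell=q)$ in $G$ obtained by following, backwards from $q$, the predecessor from which each server first received $m$: every $q_{t+1}$ received $\langle\mathit{BCAST},m\rangle$ directly from $q_t$, some intermediate $q_t$ may be faulty, and $q_\ell=q$ is non-faulty. I would establish, by induction on the steps of $p_i$ and for as long as $p_i$ has not received $m$, the invariant: \emph{there is an index $s$, $0\le s\le\ell$, such that the prefix $(q_0,\dots,q_s)$ of $P$ with its edges $(q_0,q_1),\dots,(q_{s-1},q_s)$ is contained in $\mathbf{g_i}[p_\ast]$ and $(q_s,\ast)\notin F_i$}. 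Initially $\mathbf{g_i}[p_\ast]=\{q_0\}$ and $F_i=\emptyset$, so $s=0$ works. Given the invariant, the vertex $q_s$ witnesses that the precondition of line~\ref{alg:lsc_no_dissemination} fails for $p_\ast$, so $\mathbf{g_i}[p_\ast]$ stays non-empty until $p_i$ receives $m$; combined with the fact that $p_i$ terminates, this forces $m\in M_i$, as required.

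The core of the argument --- and the step I expect to be the main obstacle --- is preserving this invariant through the handlers of Algorithm~\ref{alg:lsc}, above all \textbf{receive}~$\langle\mathit{FAIL},\cdot,\cdot\rangle$. The structural fact used throughout is the one justified in \cref{sec:earlyterm} (resting on FIFO channels and on a server forwarding every $\langle\mathit{BCAST},\cdot\rangle$ it holds before emitting any $\langle\mathit{FAIL},\cdot,\cdot\rangle$): a notification $\langle\mathit{FAIL},p_a,p_b\rangle$ certifies that $p_b$ never received $m$ from $p_a$. It follows that line~\ref{alg:lsc_rm_edge} never deletes a path edge $(q_t,q_{t+1})$; that whenever $p_a=q_t$ the successor $p_b$ excluded at line~\ref{alg:lsc_no_succs} satisfies $p_b\neq q_{t+1}$; and that line~\ref{alg:lsc_no_input} never removes a prefix vertex, since the prefix itself is a $p_\ast$-rooted path. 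The one genuinely delicate case is a notification that places $q_s$ into $F_i$: then lines~\ref{alg:lsc_no_succs}--\ref{alg:lsc_add_edge} add $q_{s+1}$ with the edge $(q_s,q_{s+1})$ (note that while $q_s\notin F_i$ it has no out-edge in $\mathbf{g_i}[p_\ast]$, so line~\ref{alg:lsc_no_succs} is indeed entered), and, cascading through line~\ref{alg:lsc_already_failed} as long as the next vertices of $P$ already belong to $F_i$, they add the subsequent already-failed $q_t$'s and their edges (each such edge $(q_t,q_{t+1})$ is not filtered out by $F_i$, again by the certification fact); since $q_\ell=q$ is non-faulty, accuracy of $\mathcal{P}$ gives $(q,\ast)\notin F_i$, so the cascade reaches some $q_{s'}\notin F_i$ with $s'>s$ and re-establishes the invariant. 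The remaining cases --- receiving $\langle\mathit{BCAST},\cdot\rangle$, a notification about a server off the current prefix, or a repeated notification about a $q_t$ with $t<s$ --- leave the prefix and its non-failed witness intact; I expect these to be routine, with the cascading bookkeeping above being where care is most needed.
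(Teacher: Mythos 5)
Your proof is correct and follows essentially the same route as the paper's: both fix the actual dissemination path of $m$ in $G$, use the fact that a notification $\langle \mathit{FAIL},\, a,\, b\rangle$ arriving before $m$ certifies that $b$ never got $m$ from $a$, and exhibit a never-suspected vertex on that path that persists in the tracking digraph, so termination forces receipt of $m$. The only difference is presentational — you run a forward invariant where the paper argues by contradiction from the minimal index $a_k$ with no failure notification, and your explicit cascade through lines~\ref{alg:lsc_no_succs}--\ref{alg:lsc_add_edge} and~\ref{alg:lsc_already_failed} spells out the step the paper dismisses with ``Clearly, $a_k \in V(\mathbf{g_j}[p_\ast])$.''
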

\begin{proof}
It is sufficient to show that if $m_\ast \in M_i$ when $p_i$ terminates, then also $m_\ast \in M_j$ when $p_j$ terminates.
We assume that $p_j$ does not receive $m_\ast$.
Let $\pi_{p_\ast,p_i}=\left(a_{1},\ldots,a_{d}\right)$ be one of the paths (in $G$) on which $m_\ast$ arrives at $p_i$.
Let $k,\, 1\leq k \leq d$ the smallest index such that $p_j$ receives no notification of $a_k$'s failure.
The existence of $a_k$ is given by the existence of $p_i$, a server that is both non-faulty and on $\pi_{p_\ast,p_i}$.
Clearly, $a_k \in V(\mathbf{g_j}[p_\ast])$. Since it terminates, $p_j$ eventually removes 
$a_k$ from $\mathbf{g_j}[p_\ast]$. In general, $p_j$ can remove $a_k$ when it receives either $m_\ast$ or a notification of $a_k$'s 
failure; yet, both alternatives lead to contradictions. 
In addition, for $k > 1$, $p_j$ can remove $a_k$ when there is no path $\pi_{p_\ast,a_k}$ in $\mathbf{g_j}[p_\ast]$.
This requires $p_j$ to remove an edge on the $\left(a_{1},\ldots,a_{k}\right)$ path. 
Thus, $p_j$ receives a message $\langle \mathit{FAIL},\, a_l,\, a_{l+1}\rangle$ with $1 \leq l < k$.
Yet, since $a_{l+1}$ received $m_\ast$ from $a_{l}$, $p_j$ must receive $\langle \mathit{BCAST},\, m_\ast \rangle$ first, 
which leads to a contradiction.
\end{proof}

\begin{corollary}
\LSC{} solves the atomic broadcast problem while tolerating up to $f$ failures.
\end{corollary}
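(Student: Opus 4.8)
The plan is to assemble the corollary directly from what has already been established, since by definition \LSC{} solves atomic broadcast precisely when, under the standing assumptions $f < k(G)$ and a perfect failure detector $\mathcal{P}$, it satisfies the four properties of (non-uniform) atomic broadcast: Validity, Agreement, Integrity~(\cref{sec:Rbcast}), and Total order~(\cref{sec:TObcast}). So the proof is a synthesis step: pair each property with the lemma (or trivial observation) that yields it.

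First I would dispatch Integrity, which is immediate: every server $p_i$ invokes \todeliver{m} exactly once for each $m \in M_i$ (line~\ref{alg:lsc_to_deliver}), the set $M_i$ only ever grows by inserting messages carried in $\langle \mathit{BCAST},\, m \rangle$ packets, each such message was previously A-broadcast by its sender, and every message is A-broadcast at most once. For Validity I would observe that a non-faulty server executing \tobroadcast{m_i} inserts $m_i$ into $M_i$, and by Lemma~\ref{lemma:lsc_termination} it eventually terminates, i.e., reaches line~\ref{alg:lsc_termin_cond} with all tracking digraphs empty; at that point it A-delivers every message of $M_i$, and hence $m_i$.

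Next I would treat Agreement and Total order together, both resting on set agreement. If a non-faulty server $p$ A-delivers $m$, then $m \in M_p$ when $p$ terminates; by Lemma~\ref{lemma:lsc_agreement}, $M_p = M_q$ for every non-faulty server $q$, and by Lemma~\ref{lemma:lsc_termination} $q$ also terminates and A-delivers all of $M_q$, hence $m$ --- this is Agreement. For Total order, the key point is that the delivery loop (line~\ref{alg:lsc_to_deliver}) iterates over the messages of $M_i$ in a deterministic sorted order that is a function of the message set alone (messages carry globally unique identifiers), independent of arrival order or local state. Since all non-faulty servers deliver from the same set, they impose the same total order on it; thus for any two messages delivered by non-faulty servers $p$ and $q$, $p$ delivers $m_1$ before $m_2$ iff $q$ does.

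The only real care needed --- the ``obstacle'', such as it is --- lies in making the determinism argument watertight: I must argue that the sort depends on nothing but the set of unique identifiers, so that equal message sets produce identical delivery sequences, and then note that the Total order property as stated (a condition on pairs of messages delivered by both of two non-faulty servers) follows trivially, since all non-faulty servers deliver exactly the same set in exactly the same order. Everything else is a direct appeal to Lemmas~\ref{lemma:lsc_termination} and~\ref{lemma:lsc_agreement} together with the elementary Integrity observation, so the corollary follows.
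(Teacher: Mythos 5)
Your proposal is correct and follows essentially the same route as the paper: Integrity is dispatched as an immediate observation about \todeliver{} and $M_i$, Validity is reduced to termination (Lemma~\ref{lemma:lsc_termination}), and Agreement together with Total order are reduced to set agreement (Lemma~\ref{lemma:lsc_agreement}) plus the deterministic delivery order. Your extra care in spelling out why the sort depends only on the message set is a welcome but minor elaboration of what the paper leaves implicit.
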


\subsection{Probabilistic analysis of accuracy}
\label{sec:probfd}

Algorithm~\ref{alg:lsc} assumes a perfect FD, which requires the accuracy property to hold. 
Accuracy is difficult to guarantee in practice: Due to network delays, 
a server may falsely suspect another server to have failed. 
Yet, when the network delays can be approximated as part of a known distribution, 
accuracy can be probabilistically guaranteed. 
Let $T$ be a random variable that describes the network delays. 
Then, we denote by $\mathit{Pr}[T > t]$ the probability that a message delay exceeds a constant $t$.

We propose an FD based on a heartbeat mechanism.
Every non-faulty server sends heartbeats to its successors in $G$;
the heartbeats are sent periodically, with a period $\Delta_\mathit{hb}$. 
Every non-faulty server $p_i$ waits for heartbeats from its predecessors in $G$;
if, within a period $\Delta_\mathit{to}$, $p_i$ receives no heartbeats from a predecessor $p_j$, 
it suspects $p_j$ to have failed. 
Since we assume heartbeat messages are delayed according to a known distribution, we can estimate the 
probability of the FD to be accurate, in particular a lower bound 
of the probability of the proposed FD to behave indistinguishably from a perfect one. 

The interval in which $p_i$ receives two heartbeats from a predecessor $p_j$ is bounded by 
$\Delta_\mathit{hb} + T$. In the interval $\Delta_\mathit{to}$, $p_j$ sends 
$\left\lfloor \Delta_\mathit{to} / \Delta_\mathit{hb} \right\rfloor$ heartbeats to $p_i$.
The probability that $p_i$ does not receive the $k$'th heartbeat within the period $\Delta_\mathit{to}$ 
is bounded by $\mathit{Pr}[T > \Delta_\mathit{to} - k\Delta_\mathit{hb}]$. 
For $p_i$ to incorrectly suspect $p_j$ to have failed, it has to receive none of the $k$ heartbeats.
Moreover, $p_i$ can incorrectly suspect $d(G)$ predecessors; also, there are $n$ servers 
that can incorrectly suspect their predecessors. 
Thus, the probability of the accuracy property to hold is at least
$
(1-
\prod\limits_{k=1}^{\lfloor\Delta_\mathit{to}/\Delta_\mathit{hb}\rfloor}
\mathit{Pr}[T > \Delta_\mathit{to} - k\Delta_\mathit{hb}])^{n \cdot d(G)}.
$

Increasing both the timeout period and the heartbeat frequency increases the likelihood 
of accurate failure detection. The probability of no incorrect failure detection in the system, 
together with the probability of less than $k(G)$ failures define the reliability of \LSC{}. 

\subsection{Widening the scope}
\label{sec:consequences}

A practical atomic broadcast algorithm must always guarantee safety.
Under the two initial assumptions, i.e., $f < k(G)$ and $\mathcal{P}$, \LSC{} guarantees both safety and liveness~(\cref{sec:lsc_correct}). 
In this section, we show that $f < k(G)$ is not required for safety, but only for liveness~(\cref{sec:discong}). 
Also, we provide a mechanism that enables \LSC{} to guarantee safety even when the $\mathcal{P}$ assumption is dropped~(\cref{sec:eventualacc}). 

\subsubsection{Disconnected digraph}
\label{sec:discong}

In general, Algorithm~\ref{alg:lsc} requires $G$ to be connected. 
A digraph can be disconnected by either (1) removing a sufficient number of vertices to break the vertex-connectivity, i.e., $f \geq k(G)$, 
or (2) removing sufficent edges to break the edge-connectivity. 
Under the assumption of reliable communication (i.e., $G$'s edges cannot be removed), only the fist scenario is possible. 
If $f \geq k(G)$, termination is not guaranteed (see Lemma~\ref{lemma:lsc_transmit}).
Yet, some servers may still terminate the round even if $G$ is disconnected.
In this case, set agreement still holds, as the proof of Lemma~\ref{lemma:lsc_agreement} 
does not assume less than $k(G)$ failures.
In summary, the $f < k(G)$ assumption is needed only to guarantee liveness; 
safety is guaranteed regardless of the number of failures (similar to Paxos~\cite{Lamport:1998:PP:279227.279229,Lamport2001}).

In scenarios where $G$'s edges can be removed, such as network partitioning, 
a non-faulty server disconnected from one of its non-faulty successors will be falsely suspected 
to have failed\footnote{Note that if $G$ is disconnected by removing vertices, 
a non-faulty server cannot be disconnected from its non-faulty successors.}.
Thus, the assumption of $\mathcal{P}$ does not hold and we need to relax it to~$\Diamond\mathcal{P}$.


\subsubsection{Eventual accuracy}
\label{sec:eventualacc}

For some distributed systems, it may be necessary to use $\Diamond\mathcal{P}$ instead of $\mathcal{P}$.
For instance, in cases of network partitioning as discussed above, or 
for systems in which approximating network delays as part of a known distribution is difficult. 
Implementing a heartbeat-based $\Diamond\mathcal{P}$ is straightforward~\cite{Chandra:1996:UFD:226643.226647}: 
When a server falsely suspects another server to have failed, it increments 
the timeout period $\Delta_\mathit{to}$; thus, eventually, non-faulty servers 
are no longer suspected to have failed.
%
Yet, when using $\Diamond\mathcal{P}$, failure notifications no longer necessarily indicate server failures. 
Thus, to adapt Algorithm~\ref{alg:lsc} to $\Diamond\mathcal{P}$, we need to ensure the correctness 
of early termination, which relies on the information carried by failure notifications. 

First, a $\langle \mathit{FAIL},\, p_j,\, p_k\rangle$ message received by $p_i$, 
indicates that $p_k$ did not receive (and it will not receive until termination) from $p_j$ 
any message not yet received by $p_i$. 
Thus, once a server suspects one of its predecessors to have failed, it must ignore any subsequent messages 
(except failure notifications) received from that predecessor (until the algorithm terminates).
As a result, when using $\Diamond\mathcal{P}$, it is still possible to decide if a server received a certain message. 

Second, $p_i$ receiving notifications of $p_j$'s failure from all $p_j$'s successors indicates 
both that $p_j$ is faulty and that it did not disseminate further any message not yet received by $p_i$.
Yet, when using $\Diamond\mathcal{P}$, these notifications no longer indicate that $p_j$ is faulty. 
Thus, both $p_i$ and $p_j$ can terminate without agreeing on the same set (i.e., $M_i \neq M_j$), which breaks \LSC{}'s safety.
In this case though, $p_i$ and $p_j$ are part of different strongly connected components.
For set agreement to hold~(\cref{sec:lsc_correct}), only the servers from one single strongly 
connected component can A-deliver messages; we refer to this component as the \emph{surviving partition}.
The other servers are considered to be faulty (for the properties of reliable broadcast to hold). 
To ensure the uniqueness of the surviving partition, it must contain at least a majority of the servers.

\textbf{Deciding whether to A-deliver.}
Each server decides whether it is part of the surviving partition via 
a mechanism based on Kosaraju's algorithm to find strongly connected components~\cite[Chapter 6]{Aho:1983:DSA:577958}. 
In particular, once each server $p_i$ decides on the set $M_i$, it R-broadcasts two messages:
(1) a forward message $\langle \mathit{FWD},\, p_i \rangle$; and (2) a backward message 
$\langle \mathit{BWD},\, p_i \rangle$. The backward message is R-broadcast using the transpose of $G$.
Then, $p_i$ A-delivers the messages from $M_i$ only when it receives both forward and backward messages 
from at least $\lfloor n/2 \rfloor$ servers.
Intuitively, a $\langle \mathit{FWD},\, p_j \rangle$ message received by $p_i$ indicates 
that when $p_j$ decided on its set $M_j$, there was at least one path from $p_j$ to $p_i$; 
thus, $p_i$ knows of all the messages known by $p_j$ (i.e., $M_j \subseteq M_i$). 
Similarly, a $\langle \mathit{BWD},\, p_j \rangle$ message indicates that $M_i \subseteq M_j$.
Thus, when $p_i$ A-delivers it knows that at least a majority of the servers (including itself) 
A-deliver the same messages.

\textbf{Non-terminating servers.}
To satisfy the properties of reliable broadcast~(\cref{sec:Rbcast}), 
non-terminating servers need to be eventually removed from the system and consequently, be considered as faulty. 
In practice, these servers could restart after a certain period of inactivity and 
then try to rejoin the system, by sending a membership request to one of the non-faulty servers.

\section{Performance analysis}
\label{sec:lsc_perform}

\LSC{} is designed as a high-throughput atomic broadcast algorithm. 
Its performance is given by 
three metrics: (1) work per server; (2) communication time; and (3) storage requirements.
Our analysis focuses on Algorithm~\ref{alg:lsc}, i.e., connected digraph and perfect FD, and 
it uses the LogP model~\cite{Culler:1993:LTR:173284.155333}. 
The LogP model is described by four parameters: the latency $L$; the overhead $o$; 
the gap between messages $g$; and the number of processes (or servers) $P$, which we denote by $n$.
We make the common assumption that $o > g$~\cite{Alexandrov:1995:LIL:215399.215427}; 
also, the model assumes short messages.
\LSC{}'s performance depends on $G$'s parameters: $d$, $D$, and $D_f$.
A discussion on how to choose $G$ is provided in Section~\ref{sec:choose_g}.

\subsection{Work per server}
\label{sec:work}

The amount of work a server performs is given by the number of messages it receives and sends. 
\LSC{} distinguishes between A-broadcast messages and failure notifications.
First, without failures, every server receives an A-broadcast message from all of its $d$ predecessors,
i.e., $(n-1) \cdot d$ messages received by each server.
This is consistent with the $\Omega(n^2f)$ worst-case message complexity for 
synchronous $f$-resilient consensus algorithms~\cite{Dolev:2013:ECE:2484239.2484269}.
Second, every failed server is detected by up to $d$ servers, 
each sending a failure notification to its $d$ successors. 
Thus, each server receives up to $d^2$ notifications of each failure. 
Overall, each server receives at most $n \cdot d + f \cdot d^2$ messages. 
Since $G$ is regular, each server sends the same number of messages.
 
In order to terminate, in a non-failure scenario, a server needs to receive at least $(n-1)$ messages 
and send them further to $d$ successors. We estimate the time of sending or receiving a message 
by the overhead $o$ of the LogP model~\cite{Culler:1993:LTR:173284.155333}.
Thus, a lower bound on termination (due to work) is given by $2(n-1)do$. 

\subsection{Communication time}
\label{sec:comm_time}

In general, the time to transmit a message (between two servers) is estimated by $T(\mathit{msg}) = L + 2o$.
We consider only the scenario of a single non-empty message $m$ being A-broadcast and 
we estimate the time between \sender{m} A-broadcasts $m$ and A-delivers~$m$.

\subsubsection{Non-faulty scenario}
We split the A-broadcast of $m$ in two: (1) \rbroadcast{m}; and (2) the empty messages $m_\emptyset$ travel back to \sender{m}. 
In a non-failure scenario, messages are R-broadcast in $D$ steps, i.e., $T_{D}(\mathit{msg}) = T(\mathit{msg})D$.
Moreover, to account for contention while sending to $d$ successors, we add to the sending overhead the expected waiting time,
i.e., $o_s = o + \frac{d-1}{2} o$. 
Note that for \rbroadcast{m}, there is no contention while receiving (every server, except \sender{m}, 
is idle until it receives $m$). Thus, the time to R-broadcast $m$ is estimated by $T_{D}(m)  = (L + o_s + o)D$.

\begin{figure}[!tp]
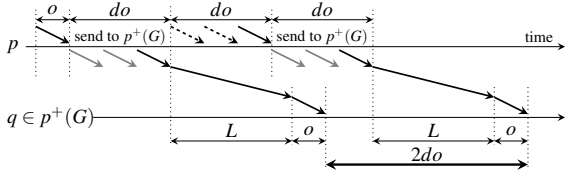

\centering
\ifdefined\PAPER
\scalebox{0.9}{\input{figures/recv_contention.tex}}
\fi
\ifdefined\TECHREP
\scalebox{0.85}{\input{figures/recv_contention.tex}}
\fi
\caption{LogP model of message transmission in \LSC{} for $d=3$. 
Dashed arrows indicate already sent messages.}
\label{fig:msg_logp}
\end{figure}

When the empty messages $m_\emptyset$ are transmitted to \sender{m}, the servers are no longer idle; 
$T_{D}(m_\emptyset)$ needs to account for contention while receiving. 
On average, servers send further one in every $d$ received messages; 
thus, a server $p$ sends messages to the same successor $q$ at a period of $2do$ (see Figure~\ref{fig:msg_logp}). 
In general, once a message arrives at a server, it needs to contend with other received messages. 
Yet, servers handle incoming connections in a round-robin fashion; 
processing a round of messages from all $d$ predecessors takes (on average) $2do$, 
i.e., $2o$ per predecessor  (see server $p$ in Figure~\ref{fig:msg_logp}).
Thus, on average, the message in-rate on a connection matches the out-rate: 
There is no contention while receiving empty messages, i.e., $T_{D}(m_\emptyset) = T_{D}(m)$.

\subsubsection{Faulty scenario---probabilistic analysis}
\label{sec:prob_analysis}

Let $\pi_\mathit{m}$ be the longest path a message $m$ has to travel before it is completely disseminated. 
If $m$ is lost (due to failures), $\pi_\mathit{m}$ is augmented by the longest path the failure notifications 
have to travel before reaching all non-faulty servers.
Let $\mathcal{D}$ be a random variable that denotes the length of the longest path $\pi_\mathit{m}$, 
for any A-broadcast message $m$, i.e., $\mathcal{D} = \max_m |\pi_\mathit{m}|,\,\forall m$; 
we refer to $\mathcal{D}$ as \LSC{}'s \emph{depth}.
Intuitively, the depth is the asynchronous equivalent of the number of rounds from synchronous systems.
Thus, $\mathcal{D}$ ranges from $D$, if no servers fail, to $f + D_f$ in the worst case scenario~(\cref{sec:lowerbound}). 
Yet, $\mathcal{D}$ is not uniformly distributed. A back-of-the-envelope calculation shows that it is very unlikely for \LSC{}'s depth to exceed 
$D_f$.

We consider a single \LSC{} round, with all $n$ servers initially non-faulty. 
Also, we estimate the probability $p_f$ of a server to fail, by using an exponential lifetime distribution model, 
i.e., over a period of time $\Delta$, $p_f=1-e^{-\Delta/\textit{MTTF}}$, where $\mathit{MTTF}$ is the mean time to failure.
If \sender{m} succeeds in sending $m$ to all of its successors, then $D\leq \pi_\mathit{m} \leq D_f$~(\cref{sec:lowerbound}).
Thus, $\mathit{Pr}[D\leq \mathcal{D} \leq D_f]=e^{-n\cdot d\cdot o/\mathit{MTTF}}$, 
where $o$ is the sending overhead~\cite{Culler:1993:LTR:173284.155333}.
Note that this probability increases if the round starts with previously failed servers. 

For typical values of $\mathit{MTTF}$ ($\approx2$~years~\cite{Sato:2012:DMN:2388996.2389022})
and $o$ ($\approx1.8\mu s$ for TCP on our InfiniBand cluster~\cref{sec:lsc_eval}), a system of $256$ servers connected 
via a digraph of degree $7$ 
(see Table~\ref{tab:sim_digraph}) 
would finish $1$ million \LSC{} rounds with 
$\mathcal{D} \leq D_f$ with a probability larger than $99.99\%$.
This demonstrates why early termination is essential for efficiency, as for most rounds no failures occur 
and even if they do occur, the probability of $\mathcal{D} > D_f$ is very small. 
Note that a practical deployment of \LSC{} should include regularly replacing failed servers and/or updating $G$ after 
failures.

\subsubsection{Estimating the fault diameter}
\label{sec:estimdf}

The fault diameter of any digraph $G$ is trivially bounded by 
$\left\lfloor \frac{n - f - 2} {k(G)-f} \right\rfloor + 1$~\cite[Theorem~6]{Chung1984}. 
However, this bound is neither tight nor does it relate the fault diameter to the digraph's diameter. 
In general, the fault diameter is unbounded in terms of the digraph diameter~\cite{Chung1984}. 
Yet, if the first $f+1$ shortest vertex-disjoint paths from $u$ to $v$ are of length 
at most $\delta_f$ for $\forall u,v\in V(G)$, then $D_f(G,f) \leq \delta_f$~\cite{Krishnamoorthy:1987:FDI:35064.36256}.
To compute $\delta_f$, we need to solve the min-max $(f+1)$-disjoint paths problem for every pair of vertices: 
Find $(f+1)$ vertex-disjoint paths $\pi_0,\ldots,\pi_f$ that minimize the length of the longest path; 
hence, $\delta_f=\max_i{|\pi_i|},\,0\leq i \leq f$.

Unfortunately, the problem is known to be strongly NP-com\-plete~\cite{Li:1989:CFT:85167.85178}.
As a heuristic to find $\delta_f$, we minimize the sum of the lengths 
instead of the maximum length, i.e., the min-sum disjoint paths problem.
This problem can be expressed as a minimum-cost flow problem; 
thus, it can be solved polynomially with well known algorithms, 
e.g., successive shortest path~\cite[Chapter 9]{Ahuja:1993:NFT:137406}.
Let $\hat{\pi}_0,\ldots,\hat{\pi}_f$ be the paths obtained from solving the min-sum disjoint 
paths problem; also, let $\hat{\delta_f}=\max_i{|\hat{\pi}_i|},\,0\leq i \leq f$.
Then, from the minimality condition of both min-max and min-sum problems, we deduce the following 
chain of inequalities:
\begin{equation} \label{eq:df_bound}
\frac{\sum\limits_{i=0}^f{|\hat{\pi}_i|}}{f+1} 
\leq \frac{\sum\limits_{i=0}^f{|\pi_i|}}{f+1} 
\leq \delta_f
\leq \hat{\delta_f}.
\end{equation}
Thus, we approximate the fault diameter bound by $\hat{\delta_f}$.
Then, we use Equation~\eqref{eq:df_bound} to check the accuracy of our approximation: 
We check the difference between the maximum and the average length of the paths obtained 
from solving the tractable min-sum problem.

As an example, we consider the binomial graph example from~\cite{Angskun2007}, 
i.e., $n=12$ and $p_i^+=p_i^-=\left\{p_j:j=i\pm\{1,2,4\}\right\}$. The graph has connectivity $k=6$ and 
diameter $D=2$. After solving the min-sum problem, we can estimate the fault diameter bound, 
i.e., $3 \leq \delta_f \leq 4$. 
After a closer look, we can see that one of the six vertex-disjoint paths from $p_0$ 
to $p_3$ has length four, i.e., $p_0 - p_{10} - p_{6} - p_{5} - p_{3}$.

\subsection{Storage requirements}

\begin{table}[!tp]
\centering
\ifdefined\PAPER
\small
\fi
\ifdefined\TECHREP
\scriptsize
\fi
\begin{tabular}{ l  l l }

  \cmidrule[1.5pt](){1-3}

  \textbf{Notation} &
  \textbf{Description} &
  \textbf{Space complexity per server} \\
  \hline

  \rowcolor{DarkGray}  
  $G$ & 
  digraph & 
  $\mathcal{O}(n \cdot d)$ \\  
   
  $M_i$ & 
  messages &
  $\mathcal{O}(n)$\\  
   
   \rowcolor{DarkGray}  
   $F_i$ & 
   failure notifications &
   $\mathcal{O}(f \cdot d)$ \\  

   $\mathbf{g_i}$& 
   tracking digraphs &
   $\mathcal{O}(f^2 \cdot d)$ \\
  
   \rowcolor{DarkGray}  
   $Q$ &
   FIFO queue & 
   $\mathcal{O}(f \cdot d)$\\
    
  \cmidrule[1.5pt](){1-3} 
\end{tabular}
  \caption{Space complexity per server for data structures used by Algorithm~\ref{alg:lsc}. The space complexity for $G$ holds for regular digraphs, such as $G_S(n,d)$~\cref{sec:choose_g}.
}
\label{tab:space_complex}
\end{table}

Each server $p_i$ stores five data structures (see Algorithm~\ref{alg:lsc}): 
(1) the digraph $G$; (2) the set of known messages $M_i$; (3) the set of received failure notifications $F_i$; 
(4) the array of tracking digraphs $\mathbf{g_i}$; and (5) the internal FIFO queue $Q$. 
Table~\ref{tab:space_complex} shows the space complexity of each data structure. 
In general, for regular digraphs, $p_i$ needs to store $d$ edges per node; 
yet, some digraphs require less storage, e.g., binomial graphs~\cite{Angskun2007} require only the graph size.
Also, each tracking digraph has at most $fd$ vertices; yet, only $f$ of these digraphs may have more than one vertex.
The space complexity of the other data structures is straightforward (see Table~\ref{tab:space_complex}).

\subsection{Choosing the digraph $G$}
\label{sec:choose_g}

\LSC{}'s performance depends on the parameters of $G$---degree, diameter, and fault diameter.
Binomial graphs have both diameter and fault diameter lower than other commonly used graphs, 
such as the binary Hypercube~\cite{Angskun2007}. 
Also, they are optimally connected, hence, offering optimal work for the provided connectivity. 
Yet, their connectivity depends on the number of vertices, which reduces their flexibility: 
Binomial graphs provide either too much or not enough connectivity. 

\begin{figure}[!tp]
\centering
\includegraphics[width=.3\textwidth]{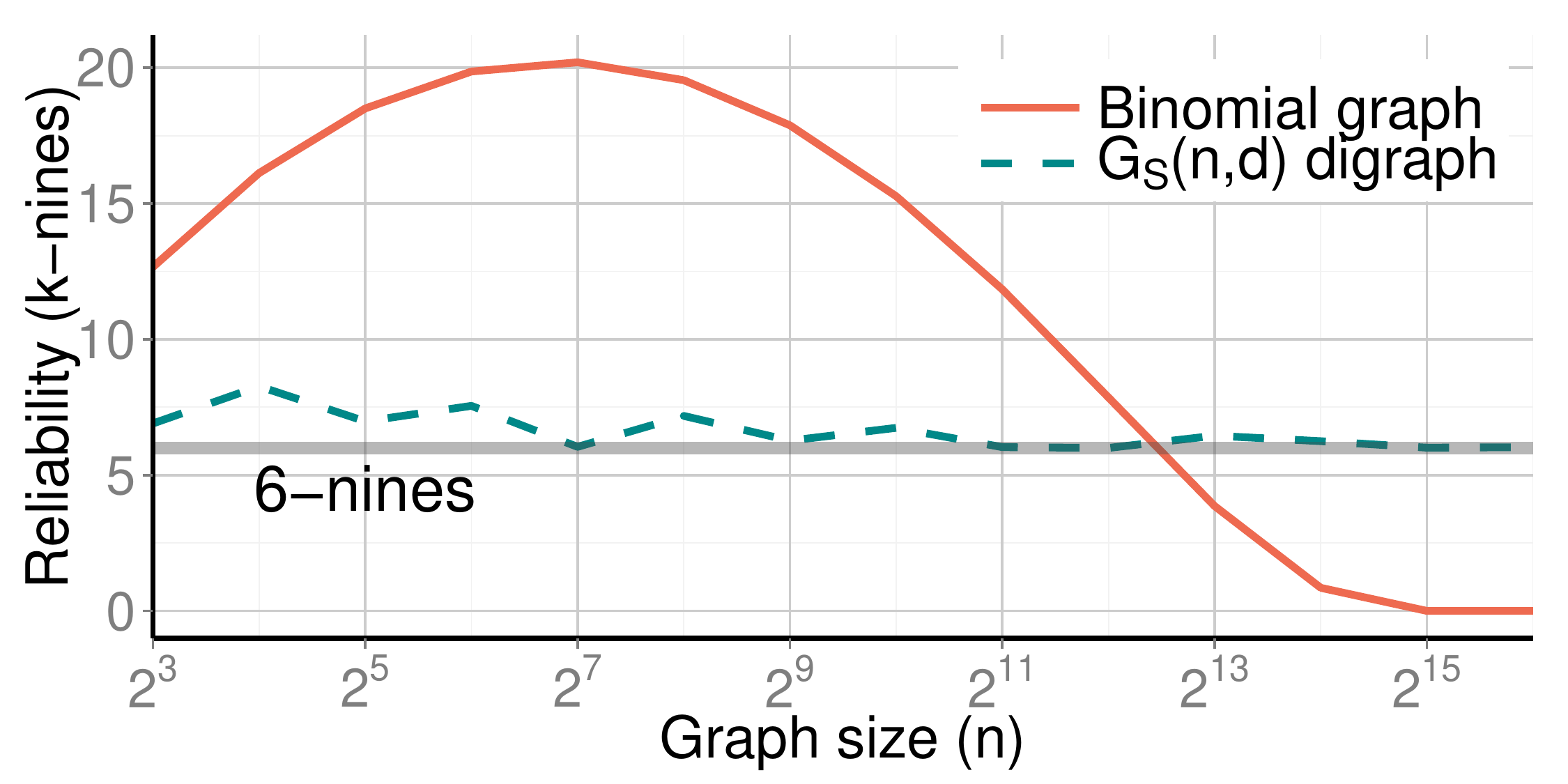}
\caption{\LSC{}'s reliability estimated over a period of $24$ hours and a server $\mathit{MTTF}\approx2$ years.}
\label{fig:digraph_reliability}
\end{figure}

We estimate \LSC{}'s reliability by 
$\rho_G = \sum_{i=0}^{k(G)-1}C(n,i)\cdot p^i_f(1-p_f)^{n-i}$,
with $p_f=1-e^{-\frac{\Delta}{\textit{MTTF}}}$ the probability of a server to fail over a period of time $\Delta$~(\cref{sec:prob_analysis}).
Figure~\ref{fig:digraph_reliability} plots this reliability as a function of $n$.
For a reliability target of 6-nines, we can see that the binomial graph offers either too much reliability, 
resulting in unnecessary work, or not sufficient reliability. 

As an alternative, \LSC{} uses $G_S(n,d)$ digraphs, for any $d\geq3$ and $n\geq 2d$~\cite{Soneoka:1996:DDC:227095.227101}.
In a nutshell, the construction of $G_S(n,d)$ entails constructing the line digraph of a 
generalized de Bruijn digraph~\cite{Du:1988:GDB:58354.58358} with the self-loops replaced by cycles.
\ifdefined\PAPER
A more detailed description that follows the steps provided in the original paper~\cite{Soneoka:1996:DDC:227095.227101} 
is available in an extendend technical report~\cite{poke2016allconcur_tr}.
\fi
\ifdefined\TECHREP
A more detailed description follows the steps provided in the original paper~\cite{Soneoka:1996:DDC:227095.227101}.

\textbf{Construction.}
Let $m$ be the quotient and $t$ the remainder of the division of $n$ by $d$, i.e., $n=md+t,\,m\geq2$.
Let $G_B(m,d)$ be a generalized de Bruijn digraph with $m$ vertices and degree $d$, 
i.e., $V(G_B(m,d))=\{0,\ldots,m-1\}$ and $E(G_B(m,d))=\{(u,v):v=ud+a(\text{mod}\, m),\,a=0,\ldots,d-1\}$.
Every vertex of $G_B(m,d)$ has at least $\lfloor d/m \rfloor$ self-loops; moreover, at least two vertices (i.e., $0$ and $m-1$), 
have $\lceil d/m \rceil$ self-loops. 
Therefore, we can remove the self-loops and replaced them with $\lfloor d/m \rfloor$ cycles connecting all the vertices 
and an additional cycle connecting only the vertices with $\lceil d/m \rceil$ self-loops; we denote the resulting $d$-regular digraph by $G_B^\ast(m,d)$.

Further, we construct the line digraph of $G_B^\ast(m,d)$, 
$L(G_B^\ast(m,d))$, which has a set of $md$ vertices $V^\prime=\{\mathit{uv}:(u,v)\in E(G_B^\ast(m,d))\}$ 
and a set of directed edges $E^\prime=\{(\mathit{uv},\mathit{wz}):v=w\}$.
If $t=0$, then $L(G_B^\ast(m,d))$ is the $G_S(n,d)$ digraph.
If $t>0$, then we choose an arbitrary vertex $v\in V(G_B^\ast(m,d))$.
Let $X=\{x_0,\ldots,x_{d-1}\}$ be a subset of $V^\prime$ with $d$ vertices $\mathit{uv},\,\forall u\in V(G_B^\ast(m,d))$;
similarly, let $Y=\{y_0,\ldots,y_{d-1}\}$ be a subset of $V^\prime$ with $d$ vertices $\mathit{vu},\,\forall u\in V(G_B^\ast(m,d))$.
Clearly, $X$ and $Y$ exist since $G_B^\ast(m,d)$ is $d$-regular.
Moreover, let $M=\{(x,y):\forall x\in X,\,\forall y\in Y\}$; clearly, $M$ is a subset of $E^\prime$.
Then, $G_S(n,d)$ is constructed by adding a set of $t$ vertices, i.e., $W=\{w_0,\ldots,w_{t-1}\}$, to $L(G_B^\ast(m,d))$ as follows:
\begin{align*}
 V(G_S(n,d))&=V^\prime \cup W\\
 E(G_S(n,d))&=E^\prime \cup \{(w_i,w_j):i\neq j\}\\
	    &\bigcup_{i=0}^{t-1}\{(x,w_i),(w_i,y):x\in X_i,y\in Y_i\} \setminus \bigcup_{i=0}^{t-1}M_i,
\end{align*}
where $M_i=\{(x_{i+p},y_{i+q}):q=i+p(\text{mod } d-t+1),\,0\leq p \leq d-t\}$,
 $X_i=\{x_i,\ldots,x_{i+d-t}\}$, and $Y_i=\{y_i,\ldots,y_{i+d-t}\}$,
for $i=0,\ldots,t-1$.

\textbf{Properties.}
\fi
Similarly to binomial graphs~\cite{Angskun2007}, $G_S(n,d)$ digraphs are optimally connected. 
Contrary to binomial graphs though, they can be adapted to various reliability targets 
(see Figure~\ref{fig:digraph_reliability} for a reliability target of 6-nines).
Moreover, $G_S(n,d)$ digraphs have a quasiminimal diameter for $n \leq d^3+d$:
The diameter is at most one larger than the lower bound obtained from the Moore bound, 
i.e., $\mathit{DL}(n,d) = \lceil\log_d{(n(d-1)+d)}\rceil-1$. 
In addition, $G_S(n,d)$ digraphs have low fault diameter bounds (experimentally verified).
Table~\ref{tab:sim_digraph} shows the parameters of $G_S(n,d)$ for different number of vertices and 
6-nines reliability; the reliability is estimated over a period of $24$ hours according to the data 
from the TSUBAME2.5 system failure history~\cite{Sato:2012:DMN:2388996.2389022,tsubame}, 
i.e., server $\mathit{MTTF}\approx2$ years.

\begin{table}[!tp]
\centering
\ifdefined\PAPER
\small
\fi
\ifdefined\TECHREP
\scriptsize
\fi
\begin{tabular}{ l c c | l c c}
  \cmidrule[1.5pt](){1-6}

  $G_S(n,d)$ & $D$ & $\mathit{DL}(n,d)$ & $G_S(n,d)$ & $D$ & $\mathit{DL}(n,d)$ \\ 
  \hline

  \rowcolor{DarkGray}  
  $G_S(6,3)$ & 2 & 2 & $G_S(64,5)$ & 4 & 3 \\ 
  
  $G_S(8,3)$ & 2 & 2 & $G_S(90,5)$ & 3 & 3 \\ 
  
  \rowcolor{DarkGray}
  $G_S(11, 3)$ & 3 & 2 & $G_S(128,5)$ & 4 & 3 \\ 
  
  $G_S(16,4)$ & 2 & 2 & $G_S(256,7)$ & 4 & 3 \\ 
  
  \rowcolor{DarkGray}
  $G_S(22,4)$ & 3 & 3 & $G_S(512,8)$ & 3 & 3 \\ 
  
  $G_S(32,4)$ & 3 & 3 & $G_S(1024,11)$ & 4 & 3 \\ 
  
  \rowcolor{DarkGray}
  $G_S(45,4)$ & 4 & 3 &  &  &  \\ 
  
  \cmidrule[1.5pt](){1-6} 
\end{tabular}
  \caption{The parameters---vertex count $n$, degree $d$ and diameter~$D$---of $G_S(n,d)$ for 6-nines reliability (estimated over a period of $24$ hours and a server $\mathit{MTTF}\approx2$ years).
  The lower bound for the diameter is $\mathit{DL}(n,d)=\lceil\log_d{(n(d-1)+d)}\rceil-1$.}
\label{tab:sim_digraph}
\end{table}

\subsection{AllConcur vs. leader-based agreement}
\label{sec:comparison}

For a theoretical comparison to leader-based agreement, we consider the following deployment: 
a leader-based group, such as Paxos, that enables the agreement among $n$ servers, i.e., clients in Paxos terminology (see Figure~\ref{fig:paxos}). 
The group size does not depend on $n$, but only on the reliability of the group members.
Also, all the servers interact directly with the leader. 
In principle, the leader can disseminate state updates via a tree~\cite{hoefler-moor-collectives};
yet, for fault-tolerance, a reliable broadcast algorithm~\cite{Buntinas:2011:SDC:2042476.2042515} 
is needed. To the best of our knowledge, there is no implementation of leader-based agreement that uses reliable broadcast 
for dissemination.

In general, in such a leader-based deployment, not all servers need to send a message. 
This is an advantage over \LSC{}, where the early termination mechanism requires every server to send a message. 
Yet, for the typical scenarios targeted by \LSC{}---the data to be agreed upon is well balanced---we can 
realistically assume that all servers have a message to send. 

\textbf{Trade-off between work and total message count.} 
The work require for reaching agreement in a leader-based deployment is unbalanced.
On the one hand, every server sends one message and receives $n-1$ messages, resulting in $O(n)$ work.
On the other hand, the leader requires quadratic work, i.e., $\mathcal{O}(n^2)$: 
it receives one message from every server and it sends every received message to all servers.
Note that every message is also replicated, adding a constant amount of work per message.

To avoid overloading a single server (i.e., the leader), \LSC{} distributes the work evenly 
among all servers---every server performs $\mathcal{O}(nd)$ work~(\cref{sec:work}).
This decrease in complexity comes at the cost of introducing more messages to the network. 
A leader-based deployment introduces $n(n-1)$ messages to the network 
(not counting the messages needed for replication). In \LSC{}, every message is sent $d$ times; 
thus, the total number of messages in the network is $n^2d$. 

\textbf{Removing and adding servers.} 
For both \LSC{} and leader-based agreement, the cost of intentionally removing and adding servers 
can be hidden by using a two-phase approach similar to the transitional configuration in Raft~\cite{Ongaro2014}.
Thus, we focus only on the cost of unintentionally removing a server---a server failure. 
Also, we consider a worst-case analysis---we compare the impact of a leader failure to that of a \LSC{} server. 
The consequence of a leader failure is threefold: 
(1) every server receives one failure notification; 
(2) a leader election is triggered; 
and (3) the new leader needs to reestablish the connections to the $n$ servers. 
Note that the cost of reestablishing the connection can be hidden if the servers 
connect from the start to all members of the group. 
In \LSC{}, there is no need for leader election. A server failure causes every server 
to receive up to $d^2$ failure notifications~(\cref{sec:work}).
Also, the depth may increase~(\cref{sec:prob_analysis}).

\textbf{Redundancy.} 
The amount of redundancy (i.e., $d$) needed by \LSC{} is given by the reliability of the agreeing servers.
Thus, $d$ can be seen as a performance penalty for requiring a certain level of reliability.
Using more reliable hardware increases \LSC{}'s performance. 
In contrast, in a leader-based deployment, 
more reliable hardware increases only the performance of message replication (i.e., less replicas are needed), 
leaving both the quadratic work and the quadratic total message count unchanged.

\section{Evaluation}
\label{sec:lsc_eval}

We evaluate \LSC{} on two production systems: 
(1) an InfiniBand cluster with 96 nodes; 
and (2) the Hazel Hen Cray XC40 system (7712 nodes).
We refer to the two systems as IB-hsw and XC40, respectively.
On both systems, each node has 128GB of physical memory and 
two Intel Xeon E5-2680v3 12-core CPUs with a base frequency of 2.5GHz.
The IB-hsw system nodes are connected through a Voltair 4036 Fabric (40Gbps);
each node uses a single Mellanox ConnectX-3 QDR adapter (40GBps). 
Moreover, each node is running ScientificLinux version 6.4.
The  XC40 system nodes are connected through the Cray Aries network.

We implemented \LSC{}\footnote{\code{}} in \textsc{C}; the implementation relies on 
\emph{libev}, a high-performance event loop library. Each instance of \LSC{} is deployed on a single physical node. 
The nodes communicate via either standard sockets-based TCP or high-performance InfiniBand Verbs (IBV); 
we refer to the two variants as \LSC{}-TCP and \LSC{}-IBV, respectively.
On the IB-hsw system, to take advantage of the high-performance network, we use TCP/IP over InfiniBand (``IP over IB'') for \LSC{}-TCP.
The failure detector is implemented over unreliable datagrams. 
To compile the code, we use GCC version 5.2.0 on the IB-hsw system and Cray Programming Environment 5.2.82 on the XC40 system.

We evaluate \LSC{} through a set of benchmarks that emulate representative real-world applications.
During the evaluation, we focus on two common performance metrics:
(1) the \emph{agreement latency}, i.e., the time needed to reach agreement; and 
(2) the \emph{agreement throughput}, i.e., the amount of data agreed upon per second.
In addition, we introduce the \emph{aggregated throughput}, a performance metric defined as the agreement throughput times the number of servers.
Also, all the experiments assume a perfect FD.

In the following benchmarks, the servers are interconnected via $G_S(n,d)$ digraphs (see Table~\ref{tab:sim_digraph}).
If not specified otherwise, each server generates requests at a certain rate.
The requests are buffered until the current agreement round is completed; then, they are packed into a message 
that is A-broadcast in the next round. 
All the figures report both the median and the $95\%$ nonparametric confidence interval around it~\cite{Hoefler:2015:SBP:2807591.2807644}. 
Moreover, for each figure, the system used to obtain the measurements is specified in square brackets.

\begin{figure}[!tp]
\centering
\subfloat[\label{fig:allconcur_ibv_latency} {\LSC{}-IBV [IB-hsw]} ] {
\includegraphics[width=.235\textwidth]{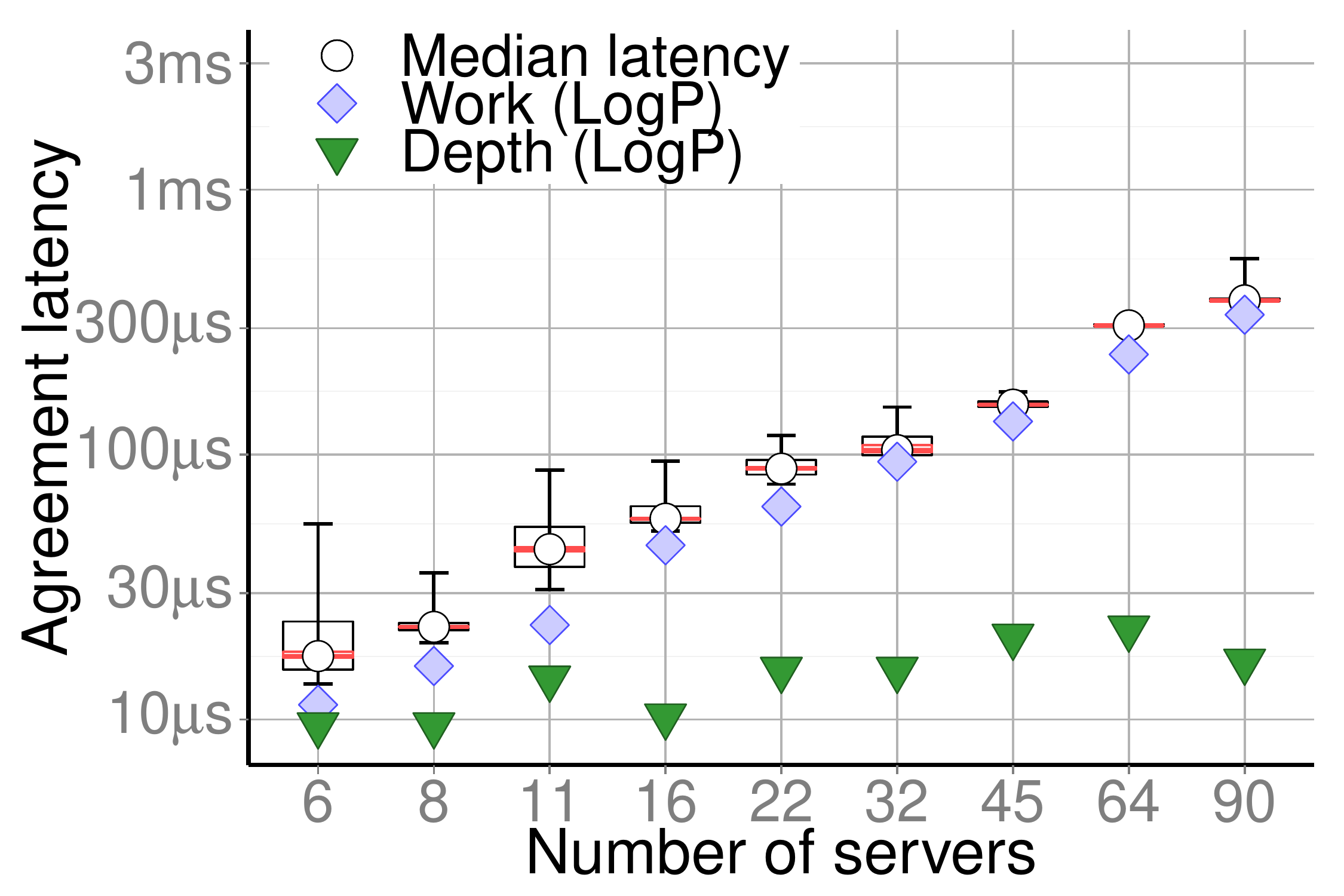}
}
\subfloat[\label{fig:allconcur_tcp_latency} {\LSC{}-TCP [IB-hsw]}]{        
\includegraphics[width=.235\textwidth]{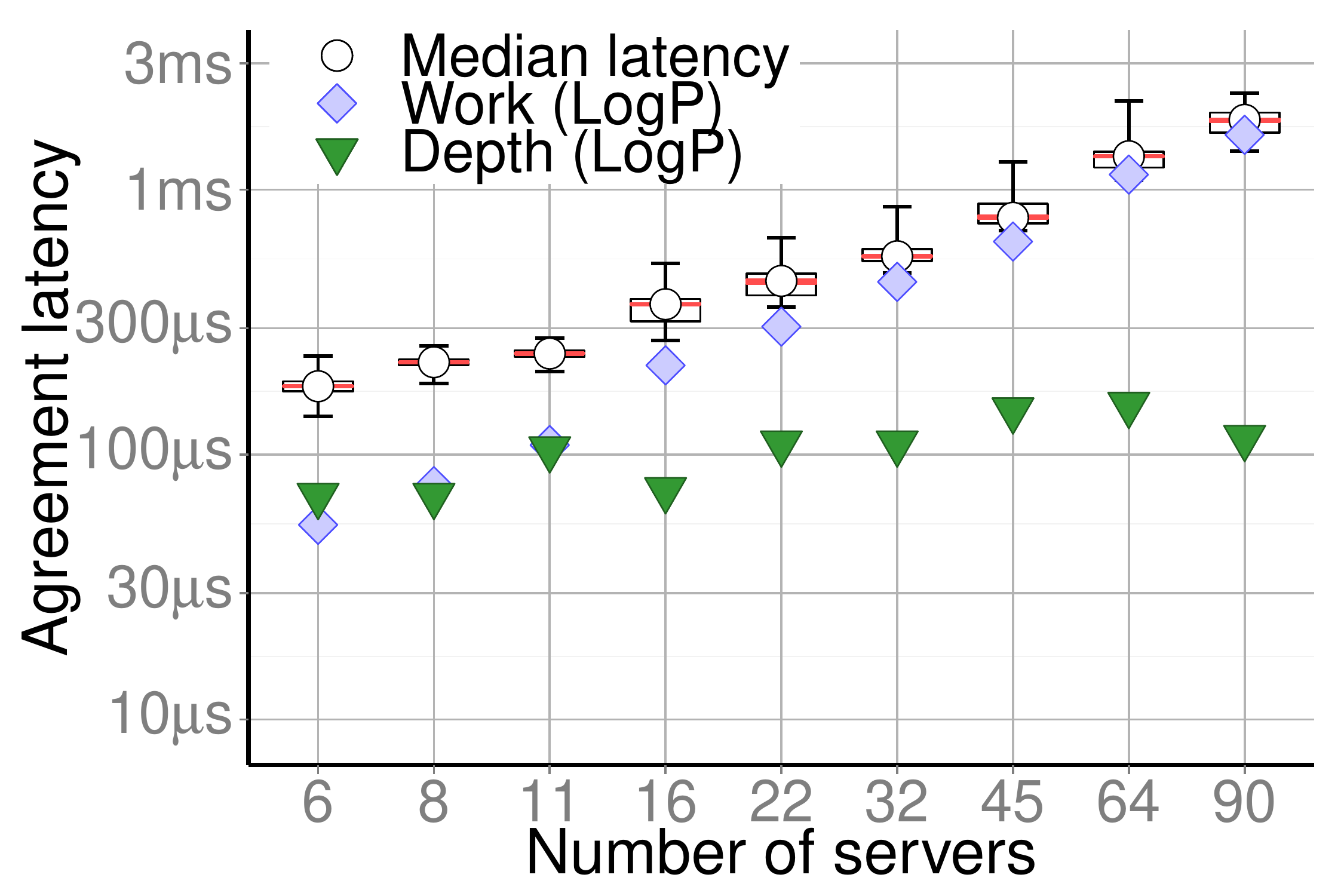}
}
\caption{Agreement latency for a single (64-byte) request. 
The LogP parameters are $L=1.25\mu s$ and $o=0.38\mu s$ over IBV 
and $L=12\mu s$ and $o=1.8\mu s$ over TCP.}
\label{fig:allconcur_latency}
\end{figure}

\textbf{Single request agreement.}
To evaluate the LogP models described in Section~\ref{sec:lsc_perform}, 
we consider a benchmark where the servers agree on one single request.
Clearly, such a scenario is not the intended use case of \LSC{}, as all servers, except one, 
A-broadcast empty messages. 
Figure~\ref{fig:allconcur_latency} plots the agreement latency as a function of system size 
for both \LSC{}-IBV and \LSC{}-TCP on the IB-hsw system and it compares it with the LogP models 
for both work and depth~(\cref{sec:lsc_perform}).
The LogP parameters for the IB-hsw system are $L=1.25\mu s$ and $o=0.38\mu s$ over IBV 
and $L=12\mu s$ and $o=1.8\mu s$ over TCP.
The models are good indicators of \LSC{}'s performance; e.g., with increasing the system size, work becomes dominant.

\begin{figure}[!tp]
\centering
\subfloat[\label{fig:allconcur_fail_all} {\LSC{}-IBV [IB-hsw]}] {
\includegraphics[width=.235\textwidth]{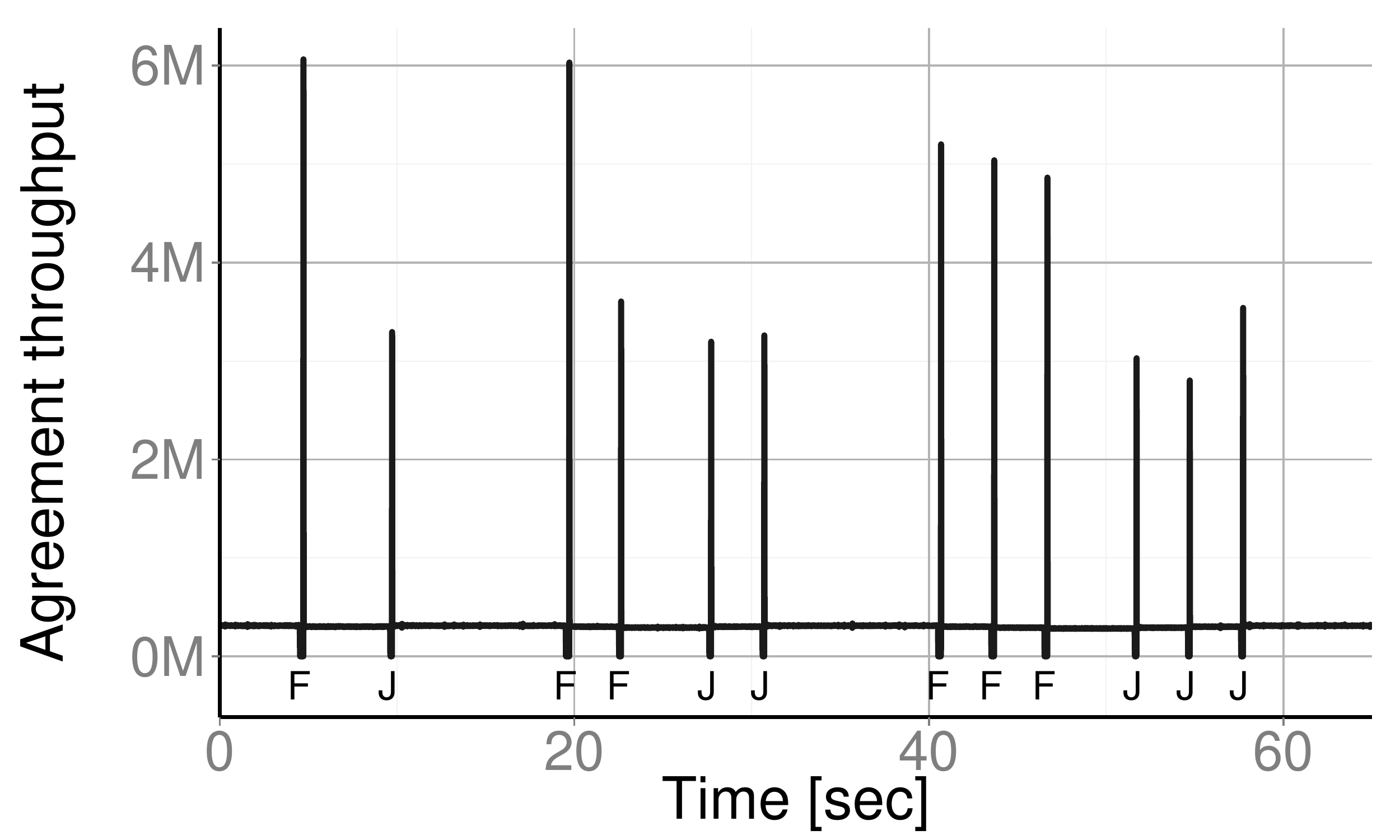}
}
\subfloat[\label{fig:allconcur_fail_zoom} Zoom-in of (a)]{        
\includegraphics[width=.235\textwidth]{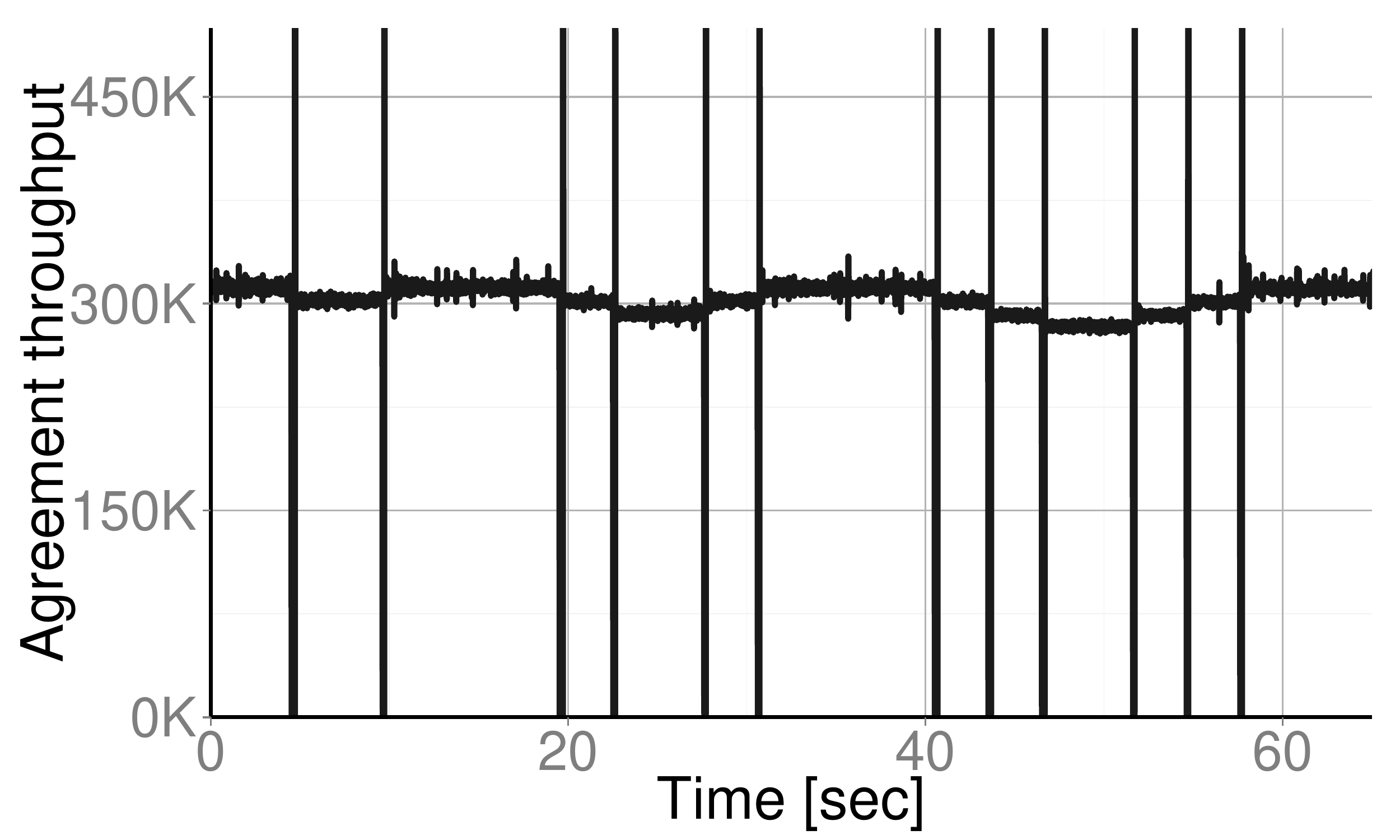}
}
\caption{Agreement throughput during membership changes---servers failing, indicated by F, and servers joining, indicated by J. 
Deployment over $32$ servers, each generating 10,000 (64-byte) requests per second.
The FD has $\Delta_{hb}=10ms$ and $\Delta_{to}=100ms$. 
The spikes in throughput are due to the accumulated requests during unavailability periods.
}
\label{fig:allconcur_fail}
\end{figure}

\ifdefined\PAPER
\begin{figure*}[!tp]
\centering
\subfloat[\label{fig:allconcur_ibv_rate_srv} {\LSC{}-IBV [IB-hsw]} ] {
\includegraphics[width=.24\textwidth]{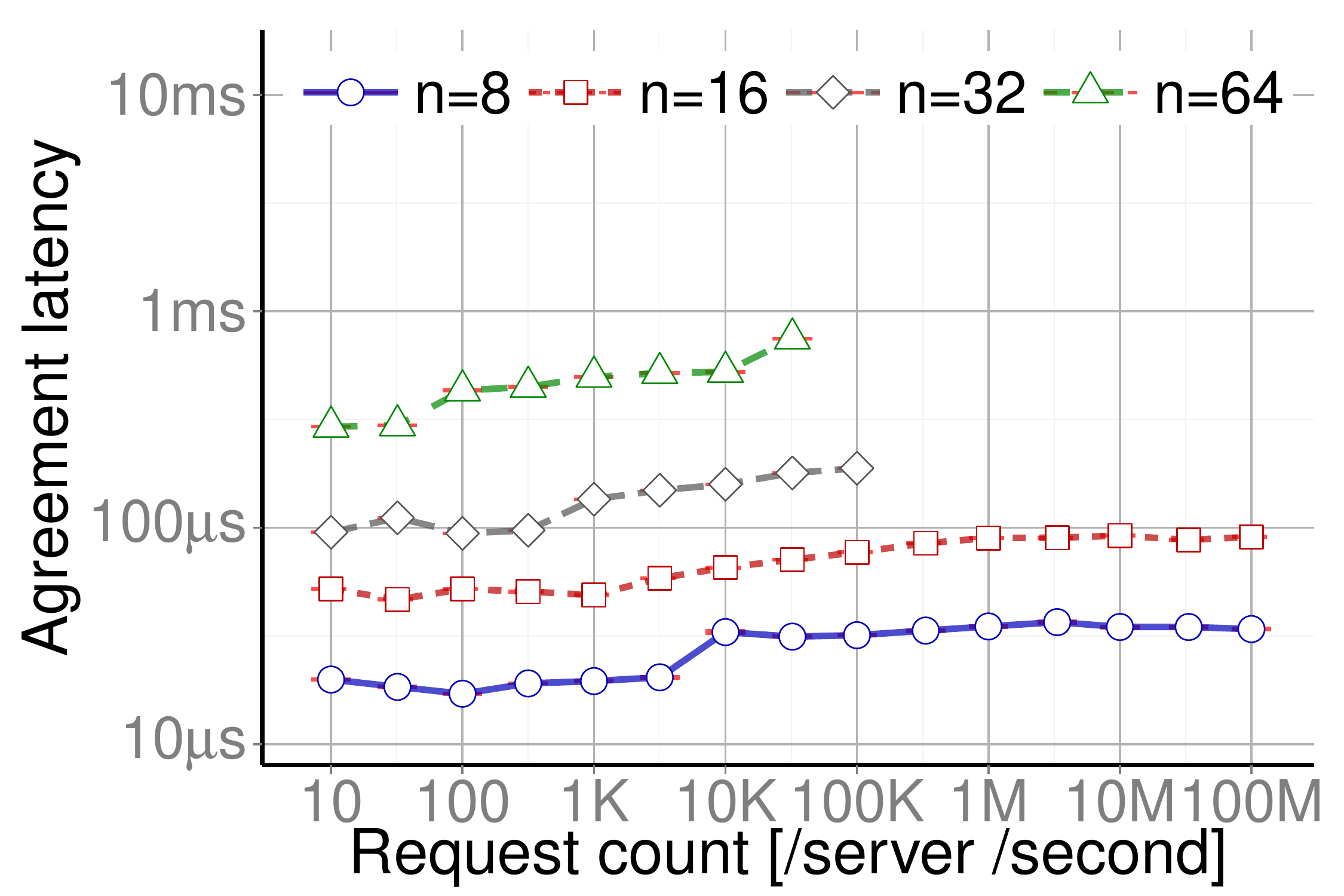}
}
~
\subfloat[\label{fig:allconcur_tcp_rate_srv} {\LSC{}-TCP [IB-hsw]} ] {
\includegraphics[width=.24\textwidth]{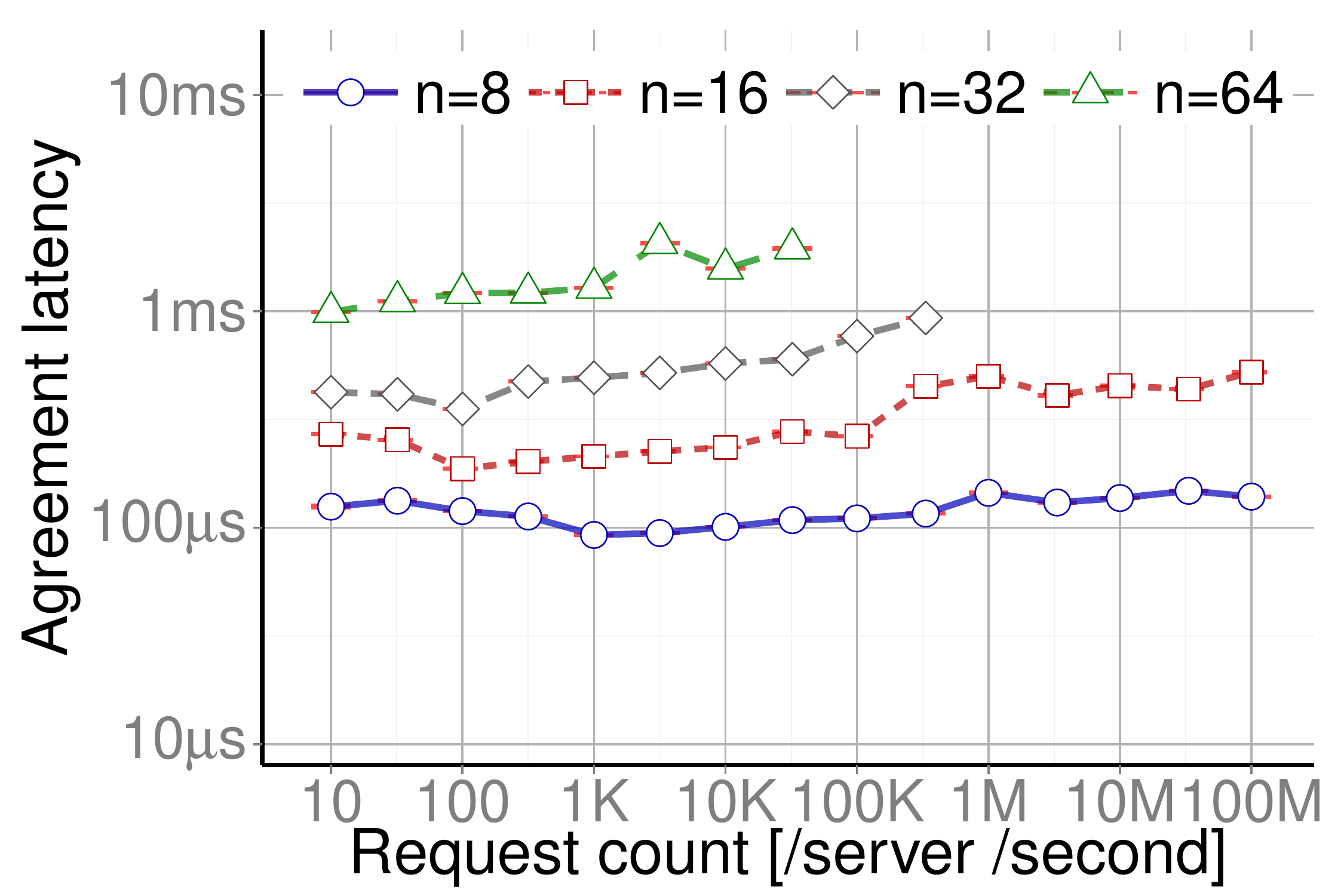}
}
~
\subfloat[\label{fig:allconcur_gaming} {\LSC{}-TCP [XC40]}] {
\includegraphics[width=.24\textwidth]{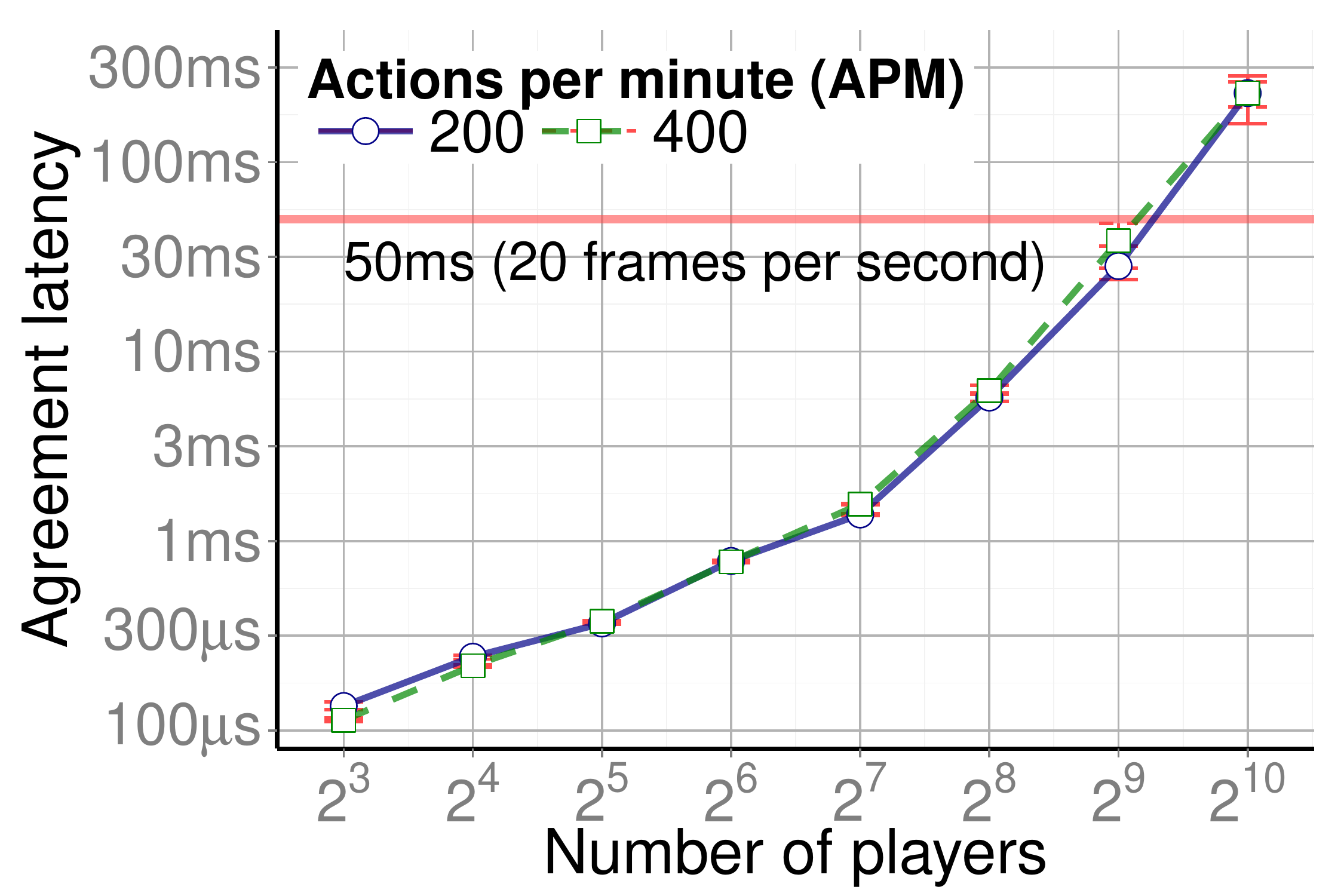}
}
\caption{(a),(b) Constant (64-byte) request rate per server.
(c) Agreement latency in multiplayer video games for different APM and 40-byte requests.}
\end{figure*}

\begin{figure*}[!tp]
\centering
\subfloat[\label{fig:mpi_allgather_throughput} {MPI\_Allgather [TCP / XC40]}] {
\includegraphics[width=.24\textwidth]{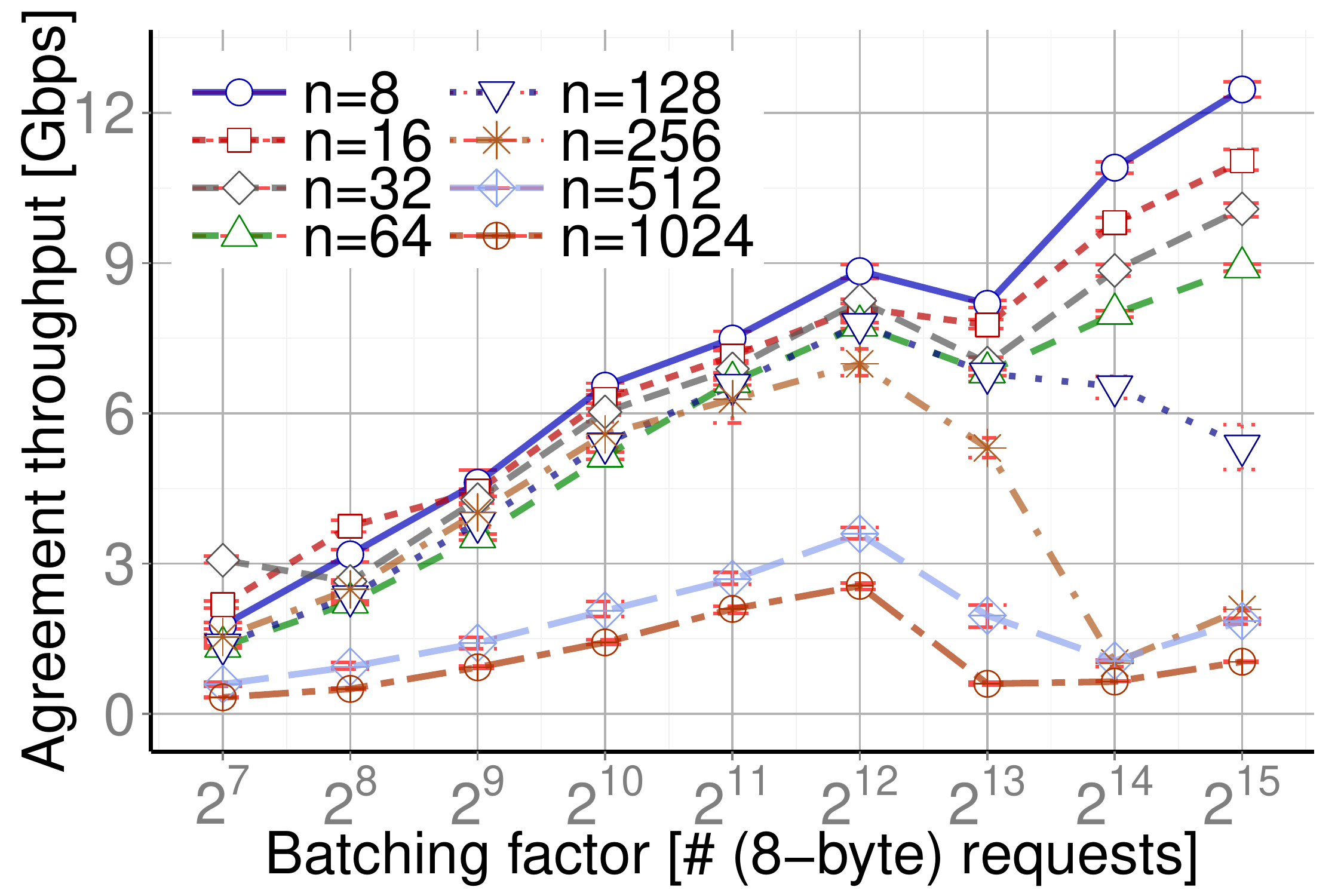}
}
~
\subfloat[\label{fig:allconcur_tcp_throughput} {\LSC{}-TCP [XC40]}] {
\includegraphics[width=.24\textwidth]{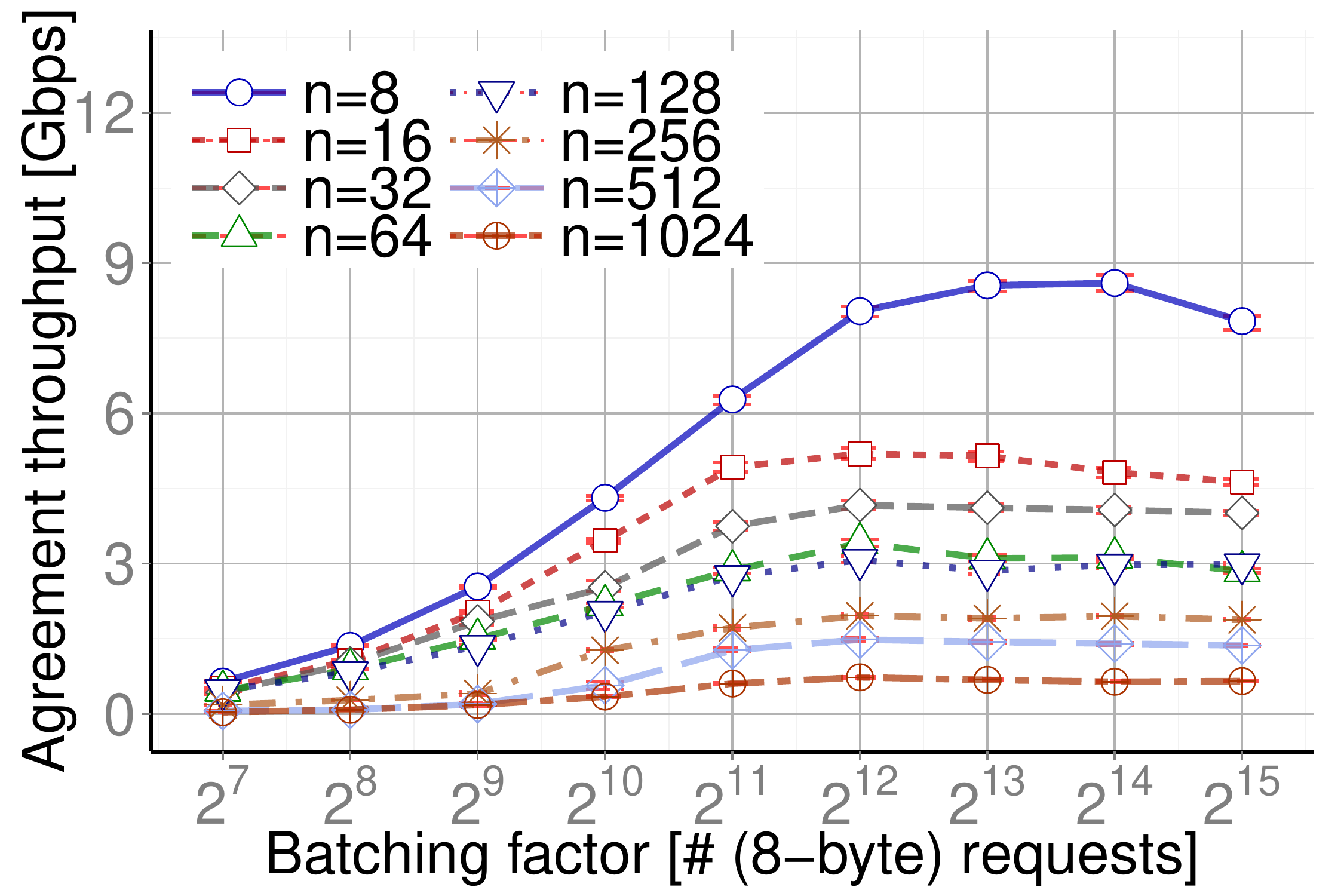}
}
~
\subfloat[\label{fig:libpaxos_throughput} {Libpaxos [TCP / XC40]}]{
\includegraphics[width=.24\textwidth]{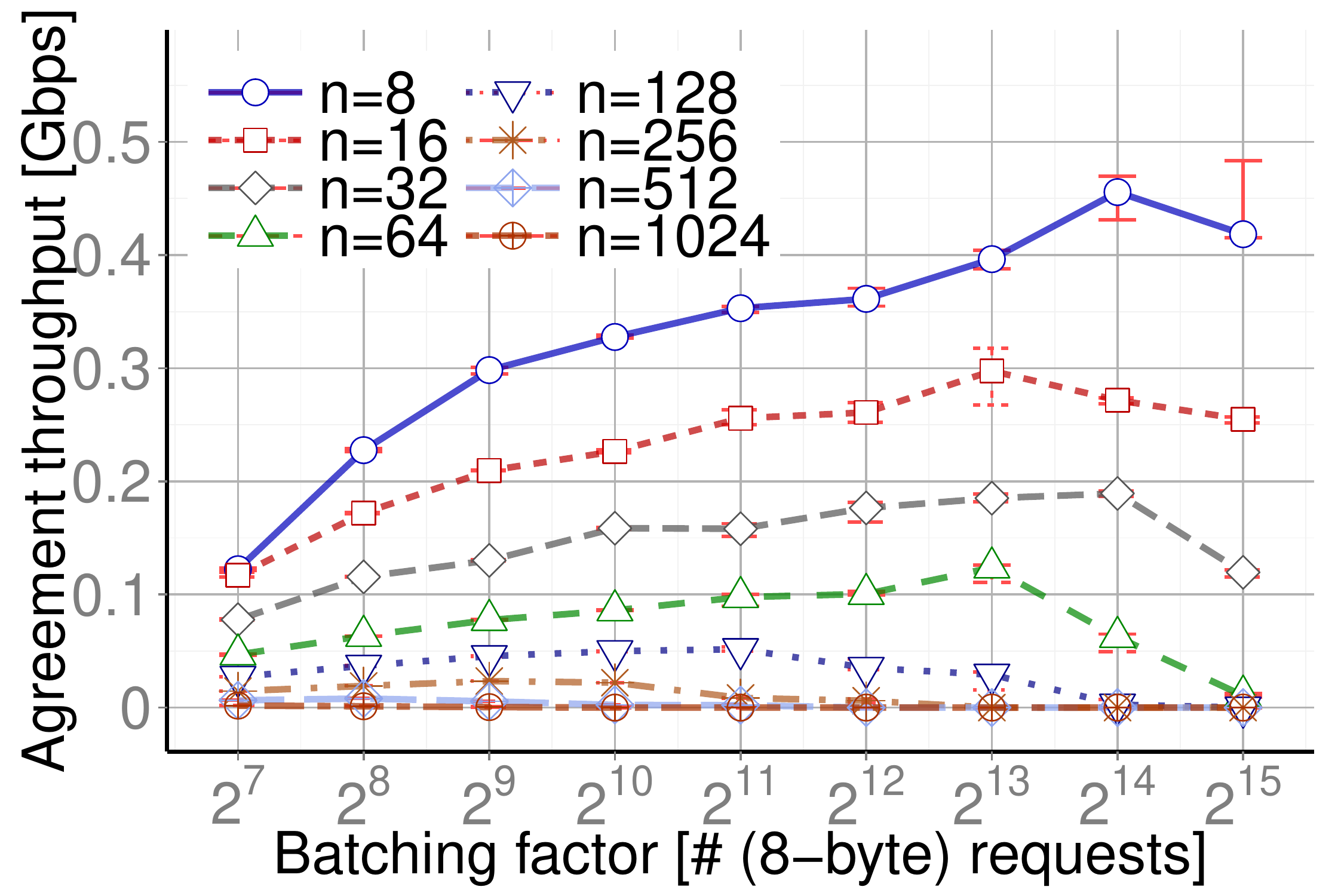}
}
~
\subfloat[\label{fig:allconcur_volume} {\LSC{}-TCP [XC40]}]{
\includegraphics[width=.24\textwidth]{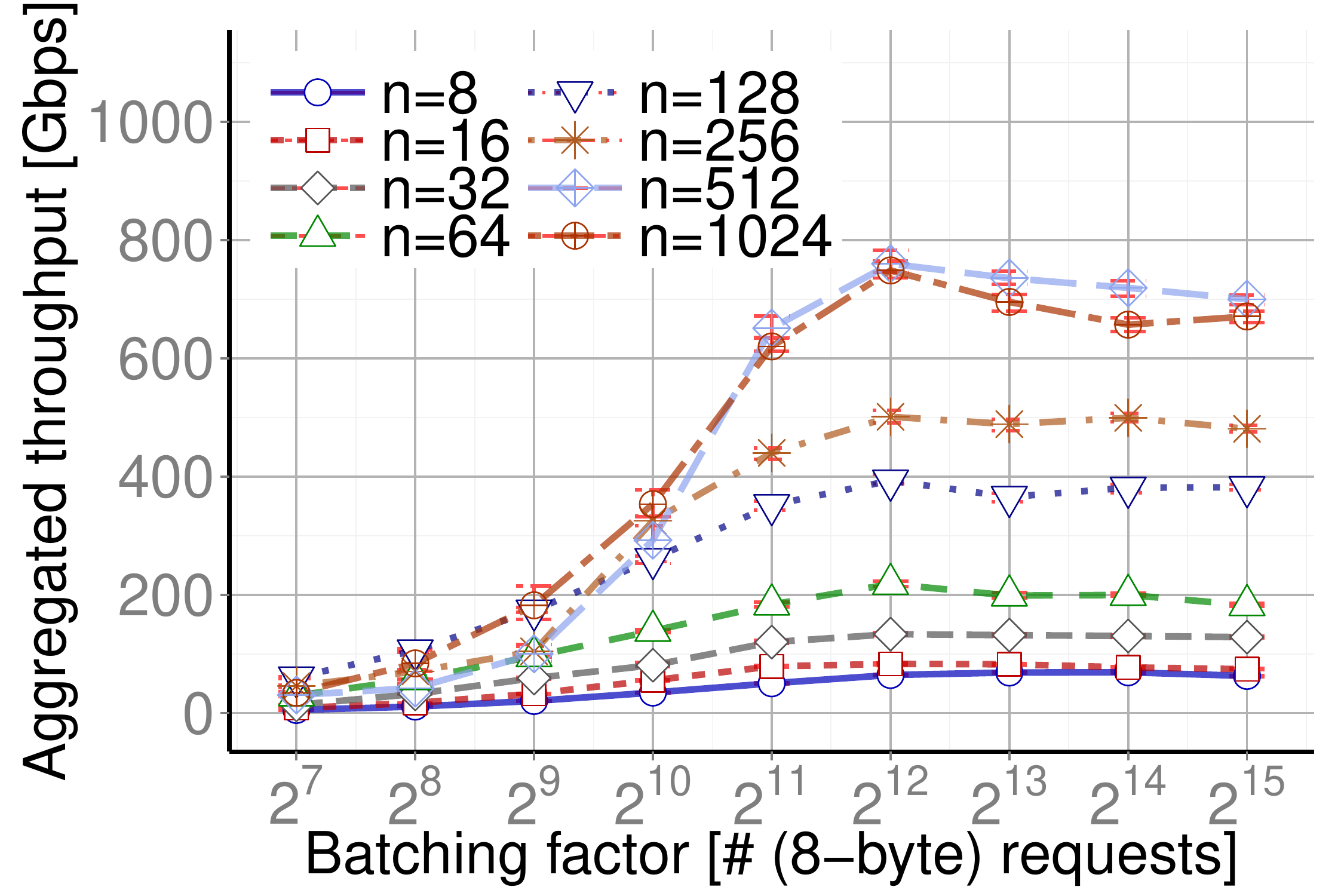}
}
\caption{(a) Unreliable agreement vs. (b) \LSC{} vs. (c) leader-based agreement---batching factor effect on the agreement throughput.
(d) Batching factor effect on the aggregated throughput.}
\label{fig:batching_factor}
\end{figure*}

\fi

\textbf{Membership changes.}
To evaluate the effect of membership changes on performance, we deploy \LSC{}-IBV on the IB-hsw system.
In particular, we consider 32 servers each generating 10,000 (64-byte) requests per second.
Servers rely on a heartbeat-based FD with a heartbeat period $\Delta_{hb}=10ms$ and a timeout period $\Delta_{to}=100ms$. 
Figure~\ref{fig:allconcur_fail} shows \LSC{}'s agreement throughput (binned into $10ms$ intervals) during a series of events, 
i.e., servers failing, indicated by F, and servers joining, indicated by J. 
Initially, one server fails, causing a period of unavailability ($\approx190ms$);
this is followed by a rise in throughput, due to the accumulated requests (see Figure~\ref{fig:allconcur_fail_all}). 
Shortly after, the system stabilizes, but at a lower throughput since one server is missing. 
Next, a server joins the system causing another period of unavailability $(\approx80ms)$ followed by another rise in throughput. 
Similarly, this scenario repeats for two and three subsequent 
failures\footnote{The $G_S(32,4)$ has vertex-connectivity four; thus, in general, it cannot safely sustain more than three failures}. 
Note that both unavailability periods can be reduced. 
First, by improving the FD implementation, $\Delta_{to}$ can be significantly decreased~\cite{Dragojevic:2015:NCD:2815400.2815425}.
Second, new servers can join the system as non-participating members until they established all necessary connections~\cite{Ongaro2014}.

\ifdefined\TECHREP
\begin{figure}[!tp]
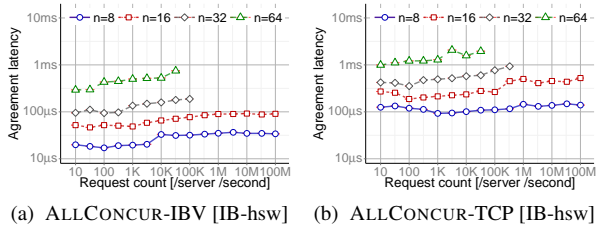

\centering
\subfloat[\label{fig:allconcur_ibv_rate_srv} {\LSC{}-IBV [IB-hsw]} ] {
\includegraphics[width=.235\textwidth]{figures/allconcur_ibv_rate_srv_round}
}
\subfloat[\label{fig:allconcur_tcp_rate_srv} {\LSC{}-TCP [IB-hsw]} ] {
\includegraphics[width=.235\textwidth]{figures/allconcur_tcp_rate_srv_round}
}
\caption{Constant (64-byte) request rate per server.}
\label{fig:allconcur_rate_srv}
\end{figure}
\fi

\ifdefined\TECHREP
\begin{figure}[!tp]
\centering
\subfloat[\label{fig:allconcur_gaming} {\LSC{}-TCP [XC40]}] {
\includegraphics[width=.235\textwidth]{figures/allconcur_gaming}
}
\subfloat[\label{fig:allconcur_rate_sys} {\LSC{}-TCP [XC40]}] {
\includegraphics[width=.235\textwidth]{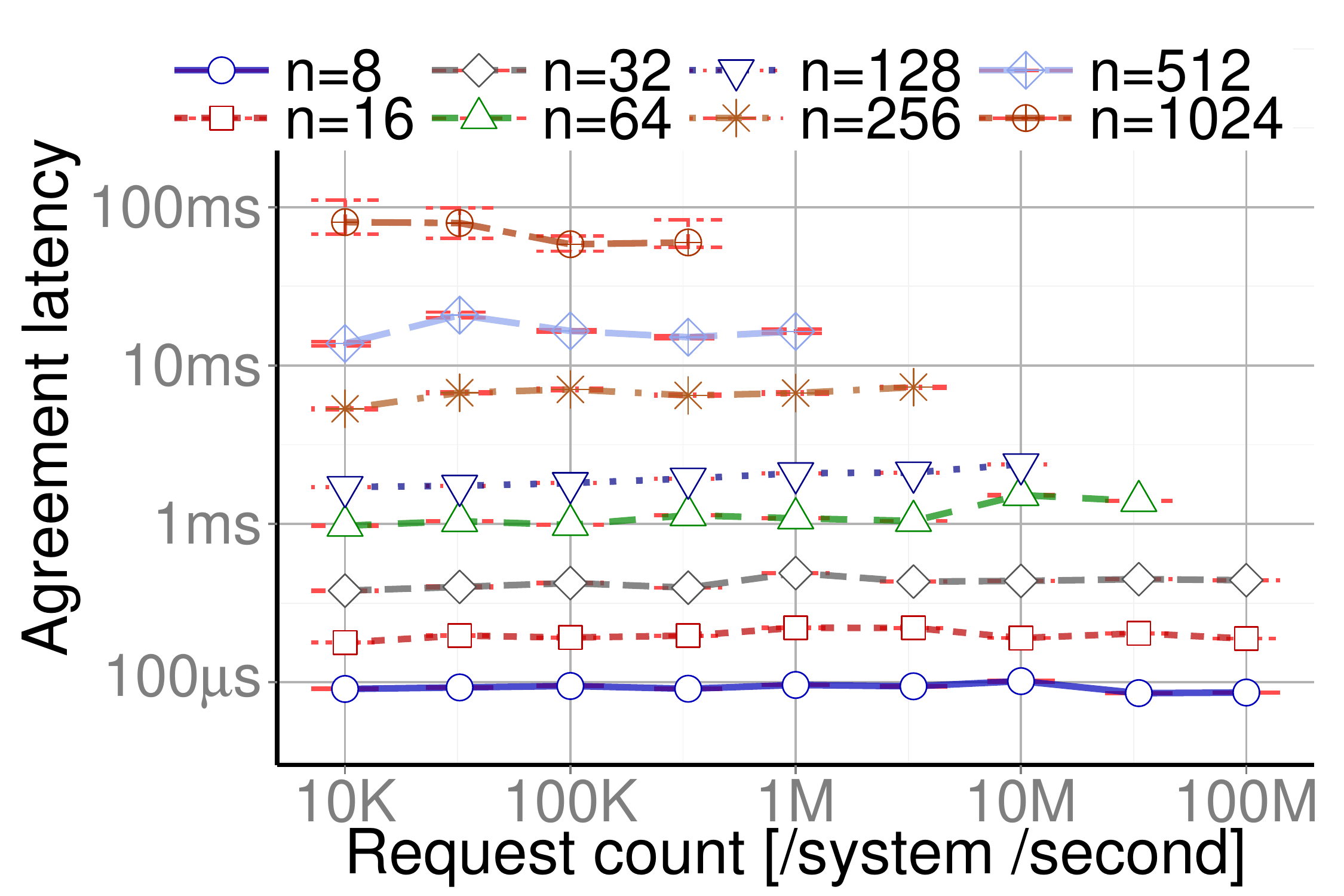}
}
\caption{(a) Agreement latency in multiplayer video games for different APM and 40-byte requests.
(b) Constant (40-byte) request rate per system.}
\label{fig:allconcur_game_sys}
\end{figure}
\fi

\ifdefined\TECHREP
\begin{figure*}[!tp]
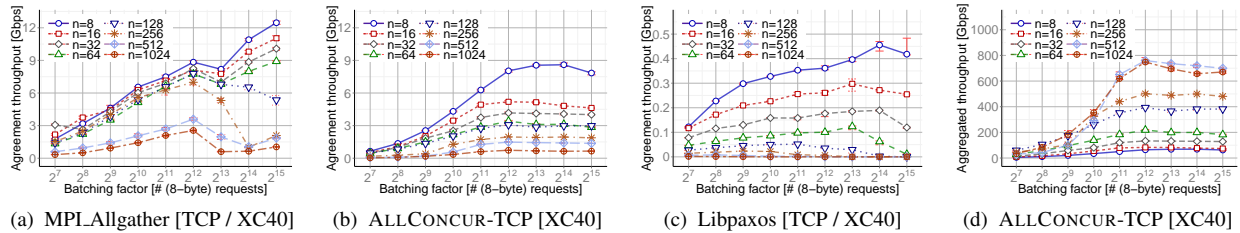

\centering
\subfloat[\label{fig:mpi_allgather_throughput} {MPI\_Allgather [TCP / XC40]}] {
\includegraphics[width=.235\textwidth]{figures/mpi_allgather_throughput}
}
~
\subfloat[\label{fig:allconcur_tcp_throughput} {\LSC{}-TCP [XC40]}] {
\includegraphics[width=.235\textwidth]{figures/allconcur_tcp_throughput}
}
~
\subfloat[\label{fig:libpaxos_throughput} {Libpaxos [TCP / XC40]}]{
\includegraphics[width=.235\textwidth]{figures/libpaxos_throughput}
}
~
\subfloat[\label{fig:allconcur_volume} {\LSC{}-TCP [XC40]}]{
\includegraphics[width=.235\textwidth]{figures/agreement_tcp_volume}
}
\caption{(a) Unreliable agreement vs. (b) \LSC{} vs. (c) leader-based agreement---batching factor effect on the agreement throughput.
(d) Batching factor effect on the aggregated throughput.}
\label{fig:batching_factor}
\end{figure*}
\fi

\textbf{Travel reservation systems.}
In this scenario, each server's rate of generating requests is bounded by its rate of answering queries. 
We consider a benchmark where 64-byte requests are generated with a constant rate per server~$r$.
Since the \emph{batching factor} (i.e., the amount of requests packed into a message) is not bounded, the system becomes unstable once 
the rate of generating requests exceeds the agreement throughput; this leads to a cycle of larger messages, leading to longer 
times, leading to larger messages etc. A practical deployment would bound the message size and reduce the inflow of requests.
Figures~\ref{fig:allconcur_ibv_rate_srv} and~\ref{fig:allconcur_tcp_rate_srv} plot the agreement latency as a function of $r$; the measurements were obtained on the 
IB-hsw system.
By using \LSC{}-IBV, $8$ servers, each generating 100 million requests per second, reach agreement in $35\mu s$; 
while $64$ servers, each generating 32,000 requests per second, reach agreement in less than $0.75ms$.
\LSC{}-TCP has $\approx3\times$ higher latency.

\textbf{Multiplayer video games.}
In this scenario, the state is updated periodically, e.g., once every $50ms$ in multiplayer video games~\cite{Bharambe:2008:DEL:1402958.1403002,Bharambe:2006:CDA:1267680.1267692}; 
thus, such systems are latency sensitive. 
Moreover, similarly to travel reservation systems, each server's rate of generating requests is bounded;
e.g., in multiplayer video games, each player performs a limited number of actions per minute (APM), 
i.e., usually $200$ APM, although expert players can exceed $400$ APM~\cite{lewis2011corpus}.
To emulate such a scenario, we deploy \LSC{} on the XC40 system; although not designed for video games, 
the system enables large-scale deployments. 
Figure~\ref{fig:allconcur_gaming} plots the agreement latency as a function of the number of players, 
for 200 and 400 APM. Each action causes a state update with a typical size of $40$ bytes~\cite{Bharambe:2008:DEL:1402958.1403002}.
\LSC{}-TCP supports the simultaneous interaction among $512$ players with an agreement latency of $28ms$ for 200 APM 
and $38ms$ for 400 APM. Thus, \LSC{} enables so called epic battles~\cite{epic_battles}.

\ifdefined\TECHREP
\textbf{Distributed exchanges.}
In this scenario, the servers are handling a globally constant rate of requests, e.g., the clients' orders to a distributed exchange service. 
To emulate such a scenario, we deploy \LSC{} on the XC40 system; although this system is not geographically distributed, 
we believe the results are good indicators of \LSC{}'s behavior at large scale. 
Figure~\ref{fig:allconcur_rate_sys} plots the agreement latency as a function of the system's rate of generating request.
An \LSC{}-TCP deployment across 8 servers handles 100 million (40-byte) requests per second with latencies below $90\mu s$;
while, across 512 servers, it can handle one million requests per second with latencies below $20ms$.
The $4\times$ increase in agreement latency for 1,024 servers is due to the  $11\times$ redundancy of the $G_S$ digraph, 
necessary for achieving our reliability target of $6$-nines (see Table~\ref{tab:sim_digraph}). 
\fi

\textbf{\LSC{} vs. unreliable agreement.}
To evaluate the overhead of providing fault-tolerance, we compare \LSC{} to 
an implementation of unreliable agreement. 
In particular, we use MPI\_Allgather~\cite{mpi-3.1} to disseminate all messages to every server.
We consider a benchmark where every server delivers a fixed-size message per round (fixed number of requests). 
Figures~\ref{fig:mpi_allgather_throughput} and~\ref{fig:allconcur_tcp_throughput} plot the agreement throughput 
as a function of the batching factor. The measurements were obtained on the XC40 system;
for a fair comparison, we used Open MPI~\cite{gabriel04:_open_mpi} over TCP to run the benchmark.
\LSC{} provides a reliability target of 6-nines with an average overhead of~$58\%$. Moreover, 
for messages of at least $2,048$ ($8$-byte) requests, the overhead does not exceed~$75\%$.

\textbf{\LSC{} vs. leader-based agreement.}
We conclude \LSC{}'s evaluation by comparing it to Libpaxos~\cite{libpaxos}, 
an open-source implementation of Paxos~\cite{Lamport:1998:PP:279227.279229,Lamport2001} over TCP.
In particular, we use Libpaxos as the leader-based group in the deployment described in Section~\ref{sec:comparison}.
The size of the Paxos group is five, sufficient for our reliability target of $6$-nines.
We consider the same benchmark used to compare to unreliable agreement---each server A-delivers a fixed-size message per round.
Figures~\ref{fig:allconcur_tcp_throughput} and~\ref{fig:libpaxos_throughput} plot the agreement throughput as a function of the batching factor;
the measurements were obtained on the XC40 system.
The throughput peaks at a certain message size, indicating the optimal batching factor to be used.
\LSC{}-TCP reaches an agreement throughput of $8.6Gbps$, equivalent to $\approx135$ million (8-byte) requests per second (see Figures~\ref{fig:allconcur_tcp_throughput}). 
As compared to Libpaxos, \LSC{} achieves at least $17\times$ higher throughput (see Figure~\ref{fig:libpaxos_throughput}).
The drop in throughput (after reaching the peak), for both \LSC{} and Libpaxos, is due to the TCP congestion control mechanism.

\LSC{}'s agreement throughput decreases with increasing the number of servers. The reason for this performance drop is twofold.
First, to maintain the same reliability, more servers entail a higher degree for $G$ 
(see Table~\ref{tab:sim_digraph}), hence, more redundancy.
Second, agreement among more servers entails more synchronization. 
Yet, the number of agreeing servers is an input parameter. 
Thus, a better metric to measure \LSC{}'s actual performance is the aggregated throughput. 
Figure~\ref{fig:allconcur_volume} plots the aggregated throughput corresponding to the agreement throughput from Figures~\ref{fig:allconcur_tcp_throughput}. 
\LSC{}-TCP's aggregated throughput increases with the number of servers and it peaks at $\approx750Gbps$ for 512 and 1,024 servers. 

\section{Related Work}

Many existing algorithms and systems can be used to implement atomic broadcast; 
we discuss here only the most relevant subset. 
D{\'e}fago, Schiper, and Urb\'{a}n provide a general overview of atomic broadcast algorithms~\cite{Defago:2004:TOB:1041680.1041682}. 
They define a classification based on how total order is established: by the sender, 
by a sequencer or by the destinations~\cite{Chandra:1996:UFD:226643.226647}. 
\LSC{} uses destinations agreement to achieve total order, i.e., agreement on a message set.
Yet, unlike other destinations agreement algorithms, 
\LSC{} is entirely decentralized and requires no leader.

Lamport's classic Paxos algorithm~\cite{Lamport:1998:PP:279227.279229,Lamport2001} 
is often used to implement atomic broadcast. Several practical systems
have been proposed~\cite{Boichat2003,Kirsch:2008:PSB:1529974.1529979,Corbett:2012:SGG:2387880.2387905,Marandi2012}.
Also, a series of optimizations were proposed, such as distributing the load among all servers 
or out-of-order processing of not-interfering requests~\cite{moraru2013there,lamport2004generalized,Mao:2008:MBE:1855741.1855767}. 
Yet, the commonly employed simple replication scheme is not designed to scale to hundreds of instances.

State machine replication protocols are similar to Paxos but often claim
to be simpler to understand and implement. Practical implementations
include ZooKeeper~\cite{Hunt:2010:ZWC:1855840.1855851}, Viewstamped Replication~\cite{Liskov2012},
Raft~\cite{Ongaro2014}, Chubby~\cite{Burrows:2006:CLS:1298455.1298487} and DARE~\cite{Poke:2015:DHS:2749246.2749267} among others.
These systems commonly employ a leader-based approach which makes them fundamentally unscalable.
Increasing scalability comes often at the cost of relaxing the consistency model~\cite{li2013zht,DeCandia:2007:DAH:1323293.1294281}.
Moreover, even when scalable strong consistency is provided~\cite{Glendenning:2011:SCS:2043556.2043559}, these systems aim to increase data reliability, 
an objective conceptually different than distributed agreement.

Bitcoin~\cite{bitcoin} offers an alternative solution to the (Byzantine fault-tolerant) atomic broadcast problem: It uses \emph{proof-of-work} to order the 
transactions on a distributed ledger. In a nutshell, a server must solve a cryptographic puzzle in order to add a block of transactions to 
the ledger. Yet, Bitcoin does not guarantee \emph{consensus finality}~\cite{vukolic2015quest}---multiple servers solving the puzzle may 
lead to  a fork (conflict), resulting in branches. Forks are eventually solved by adding new blocks. 
Eventually one branch outpaces the others, thereby becoming the ledger all servers agree upon. To avoid frequent forks, Bitcoin controls the 
expected puzzle solution time to 10 minutes and currently limits the block size to 1MB, resulting in limited performance, i.e., around seven transactions per 
second. To increase performance, Bitcoin-NG~\cite{eyal2016bitcoin} uses proof-of-work to elect a leader that can add blocks until a new leader 
is elected. Yet, conflicts are still possible and consensus finality is not ensured. 

\section{Conclusion}

In this paper we present \LSC{}: a distributed agreement system that relies on a novel 
leaderless atomic broadcast algorithm. \LSC{} uses a digraph $G$ as overlay network; 
thus, the fault-tolerance $f$ is given by $G$'s vertex-connectivity $k(G)$ and can be adapted 
freely to the system specific requirements. 
We show that \LSC{} achieves competitive latency and throughput for 
\ifdefined\PAPER
two real-world scenarios. 
\fi
\ifdefined\TECHREP
three real-world scenarios. 
\fi
In comparison to Libpaxos, \LSC{} achieves at least $17\times$ higher throughput for the considered scenario.
We prove \LSC{}'s correctness under two assumptions---$f < k(G)$ and a perfect failure detector. 
Moreover, we show that if $f \geq k(G)$, \LSC{} still guarantees safety, 
and we discuss the changes necessary to maintain safety when relaxing the assumption of a perfect failure detector.

In summary, \LSC{} is highly competitive and, due to its decentralized approach, enables hitherto unattainable system 
designs in a variety of fields.

\ifdefined\TECHREP
\textbf{Acknowledgements.}
\fi
\ifdefined\PAPER
\begin{acks}
\fi
This work was supported by the German Research Foundation (DFG) as part of the Cluster 
of Excellence in Simulation Technology (EXC 310/2) at the University of Stuttgart.
\ifdefined\PAPER
We thank Michael Resch for support; our shepherd Samer Al Kiswany and the anonymous reviewers; 
Nitin H. Vaidya, Jos\'{e} Gracia and Daniel Rubio Bonilla for helpful discussions;
and Holger Berger for providing support with the InfiniBand machine.
\fi
\ifdefined\TECHREP
We thank Michael Resch for support; Nitin H. Vaidya, Jos\'{e} Gracia and Daniel Rubio Bonilla for helpful discussions;
and Holger Berger for providing support with the InfiniBand machine.
\fi
\ifdefined\PAPER
\end{acks}
\fi

{
\ifdefined\TECHREP
\small
\bibliographystyle{abbrv}
\fi
\ifdefined\PAPER
\bibliographystyle{ACM-Reference-Format}
\fi
\bibliography{references}}

\end{document}